\newif\ifFull
\begin{document}

\title{Cole's Parametric Search Technique Made Practical}

\author{Michael T. Goodrich \\[5pt]
Dept.~of Computer Science \\
University of California, Irvine
\and Pawe\l{} Pszona \\[5pt]
Dept.~of Computer Science \\
University of California, Irvine
}
\date{}

\maketitle

\begin{abstract}
Parametric search has been widely used in
geometric algorithms.
Cole's improvement
provides a way of saving a logarithmic factor in
the running time over what is achievable using the
standard method.
Unfortunately, this improvement comes at the expense of making an already
complicated algorithm even more complex; hence, this technique
has been mostly of theoretical interest.
In this paper,
we provide an algorithm engineering framework that allows
for the same asymptotic complexity to be achieved
probabilistically in a way that is both simple and practical
(i.e., suitable for actual implementation).
The main idea of our approach is to show that a variant
of quicksort, known as \emph{boxsort}, can be used
to drive comparisons, instead of using a sorting network,
like the complicated AKS network, or an EREW parallel
sorting algorithm, like the fairly intricate parallel mergesort algorithm.
This results in a randomized optimization algorithm with
a running time matching that of using Cole's method, with high probability,
while also being practical.
We show how this results in
practical implementations of some geometric algorithms
utilizing parametric searching and provide experimental
results that prove practicality of the method.
\end{abstract}

\section{Introduction}\label{sec_intro}

\emph{Parametric search}~\cite{DBLP:journals/jacm/Megiddo83}
has proven to be
a useful technique in design of efficient
algorithms for many geometric and combinatorial 
optimization problems (e.g., see~\cite{Agarwal:1998,Agarwal:1994,Salowe:2004}).
Example applications include 
ray shooting~\cite{agarwal:794},
slope selection~\cite{cole:792},
computing the Fr{\'e}chet distance
between two polygonal curves~\cite{DBLP:journals/ijcga/AltG95, Chambers:2008},
matching drawings of planar graphs~\cite{Alt2003262},
labeling planar maps with rectangles~\cite{knntw-lprvs-02},
and various other matching and approximation problems
(e.g., see~\cite{Duncan:1997,Fournier:2011,Goodrich:1995}).

Although it has been superseded in some applications by
Chan's randomized optimization technique~\cite{DBLP:journals/dcg/Chan99,DBLP:conf/soda/Chan04},
for many problems (most notably, involving Fr{\'e}chet distance)
asymptotically best known results still depend on parametric searching.

The technique is applied to a decision problem, $B$,
whose solution depends on a real parameter, $\lambda$, in a monotonic way, 
so that $B$ is true on some interval $(-\infty,\lambda^*)$.
The goal is to determine the value of $\lambda^*$, the maximum for
which $B$ is true.
To achieve this goal, the parametric search approach utilizes two algorithms.
The first algorithm, $\cal C$, is a sequential \emph{decision} algorithm 
for $B$ that 
can determine if a given $\lambda$ is less than, equal to, or greater than
$\lambda^*$.
The second algorithm, $\cal A$, is a \emph{generic} parallel algorithm 
whose
inner workings are driven by ``comparisons,'' which are either independent of
$\lambda$ or
depend on the signs of low-degree polynomials in $\lambda$.
Because $\cal A$ works in parallel, its comparisons come in batches,
so there are several independent such comparisons that occur at the same
time.
The idea, then, is to run $\cal A$ on the input that depends on the unknown
value $\lambda^*$, which will result in actually finding that value
as a kind of by-product (even though we do not know $\lambda^*$, $\cal C$
can be used to resolve comparisons that appear during the execution of $\cal A$).
The next step is to simulate an execution of $\cal A$ sequentially.
To resolve comparisons that occur in a single step of this simulation, 
we can use the algorithm $\cal C$
to perform binary search among the (ordered) roots
of the polynomials in $\lambda$ for these comparisons, which allows us to
determine signs of all these polynomials, hence, allows us to continue
the simulation.
When the simulation completes, we will have determined the value of
$\lambda^*$.
Moreover, the running time for performing this simulation is 
$O(P(n)T(n) + C(n) T(n)\log P(n))$, where
$C(n)$ is the (sequential) running time of $\cal C$,
$T(n)$ is the (parallel) running time of $\cal A$,
and
$P(n)$ is the number of processors used by $\cal A$.

Cole~\cite{DBLP:journals/jacm/Cole87} shows how to improve the asymptotic
performance of the parametric search technique when sorting is
the problem solved by $\cal A$.
His improvement comes from
an observation that performing a separate binary search for each step of the
algorithm $\cal A$ will often ``waste'' calls to $\cal C$ to resolve a
relatively small number of comparisons.
Rather than resolve all the comparisons of a single step of $\cal A$, he
instead assumes that $\cal A$ is implemented as 
the AKS sorting network~\cite{Ajtai:1983:SCL:61981.61982}
or an optimal EREW parallel sorting algorithm~\cite{Cole:1988,Goodrich:1996},
which allows for comparisons on multiple steps of $\cal A$ to
be considered at the same time (so long as their preceding comparisons have
been resolved).
This improvement results in a running time for the optimization
problem that is 
$O(P(n)T(n) + C(n) (T(n)+\log P(n)))$.

From an algorithm engineering perspective, the ``classical''
parametric search technique 
(utilizing a parallel algorithm)
is admittedly 
difficult to implement, although some implementations 
do exist~\cite{DBLP:conf/compgeom/SchwerdtSS97,Toledo91extremalpolygon,
DBLP:journals/comgeo/OostrumV04}.
Cole's improvement is even more complex, however, and
we are not familiar with any implementations of his 
parametric search optimization.

Even without Cole's improvement, a challenge for implementing
the parametric search technique is the simulation
of a parallel algorithm on a sequential machine.
This difficulty has motivated some researchers to abandon 
the use of parametric searching
entirely and instead use other paradigms,
such as expander graphs~\cite{Katz1993115}, 
geometric random sampling~\cite{Mat91},
and $\epsilon$-cuttings~\cite{Bronnimann199823}
(see also~\cite{Agarwal:1998}).

Interestingly,
 van~Oostrum and Veltkamp~\cite{DBLP:journals/comgeo/OostrumV04}
show that, for sorting-based parametric search applications, one can 
use the well-known \texttt{quicksort} algorithm\ifFull~\cite{Hoare:1961}\fi\ to 
drive comparisons instead of a parallel sorting algorithm.
Unfortunately, as van~Oostrum and Veltkamp note in their paper, Cole's 
improvement cannot be applied in this case.
The main difficulty is that, when viewed as a kind of parallel algorithm,
comparisons to be done at one level of \texttt{quicksort} become known
only after all the comparisons on the level above have been resolved.
Thus, comparisons
cannot be pipelined in the way required by Cole's optimization when using
this approach.
The result, of course, is that this sets up an unfortunate tension between
theory and practice, forcing algorithm designers to choose between
a practical, but asymptotically inferior, implementation or an
impractical
algorithm whose running time is asymptotically better by a logarithmic
factor.

\subsection{Our Results}
We show
that it is, in fact,
possible to implement Cole's parametric search technique
in a manner that is efficient and
practical (i.e., fast and easy to implement).
The main idea is to use a variant
of quicksort, known as 
\texttt{boxsort}~\cite{DBLP:journals/siamcomp/Reischuk85},
to drive comparisons (instead of sorting networks,
like the complicated AKS network or an EREW parallel
sorting algorithm).
We apply a potential 
function to comparisons in the \texttt{boxsort} 
algorithm, which,
together with a weighted-median-finding algorithm,
allows us to schedule these comparisons in a pipelined fashion
and achieve, with high probability,
the same asymptotic running time as Cole's method,
while also being practical. 
Moreover, we provide experimental results that
give empirical evidence supporting these claims
for the ``median-of-lines'' problem~\cite{DBLP:journals/jacm/Megiddo83} and
the geometric optimization problems of matching planar drawings~\cite{Alt2003262}
and labeling planar maps with rectangles~\cite{knntw-lprvs-02}.

\ifFull
The rest of the paper is organized as follows:
in Section~\ref{sec_psearch} we describe parametric
search in more detail. In Section~\ref{sec_our_psearch} we
describe our improvement of parametric search.
In Section~\ref{sec_impl} we describe our experimental results and
Section~\ref{sec_conclusion} concludes our paper.
\fi

\section{Parametric Search Explained} \label{sec_psearch}

In this section,
we provide a more in-depth description of the parametric search technique.
Recall that $B$ is a problem that we want to solve.
Furthermore, we restrict ourselves to the case
where the generic algorithm $\cal A$ is a sorting algorithm.
We require of $B$ the following.

\begin{enumerate}
  \item There is a \emph{decision algorithm}, $\cal C$,
    which, for any value $\lambda$,
    resolves a comparison $\lambda < \lambda^*$
    in time $C(n)$ without actually knowing $\lambda^*$
    (note that $C(n)$ is a function of the size of input to $B$).
    Typically, $C(n)$ is at least $\Omega(n)$, as opposed
    to $O(1)$ comparison time which is usual for classical sorting algorithms.
  \item There is an efficient way of generating values $x_i$
    (with each $x_i$ being either a real value or a real-valued function of $\lambda$)
    from an input to problem $B$. Ideally, it produces $O(n)$ such values.
  \item For each $x_i<x_j$ comparison,
    the answer is determined by the sign of a low-degree polynomial in $\lambda$
    at $\lambda = \lambda^*$ (polynomials for different comparisons may differ).
  \item \emph{Critical values} (values $\lambda$ that, based on combinatorial
    properties of $B$, have the potential of being equal to $\lambda^*$)
    form a subset of the set of roots of the polynomials determining answers
    to every possible comparison $x_i<x_j$.
\end{enumerate}

Then, as a by-product of sorting values $x_i$, we get (directly or indirectly)
the answers to all comparisons $\lambda < \lambda^*$, where $\lambda$'s are roots
of all comparisons $x_i < x_j$. Therefore, we are able to find $\lambda^*$.

We can solve $B$ in the following way: generate
$x_i$'s, sort them
using algorithm $\mathcal{A}$ and recover $\lambda^*$ from the answer.
If $\mathcal{A}$ sorts $n$ items in $T(n)$ comparisons and each
comparison is resolved in time $O\big(C(n)\big)$ (it requires determining
whether $\lambda < \lambda^*$ for a constant number of roots $\lambda$),
solving $B$ this way takes time $T(n)C(n)$.

It is important to note that if there are $k$ comparisons $x_i < x_j$,
we can avoid calling $\mathcal{C}$ on every single root of their polynomials,
and still resolve them all.
This is because resolving $\lambda < \lambda^*$ automatically resolves
comparisons for values $\lambda' \leq \lambda$ (if the result was
\verb\YES\) or $\lambda'' > \lambda^*$ (if the result was \verb\NO\).
Therefore, we can solve $k$ comparisons in only $O(\log k)$ calls to $\mathcal{C}$,
if in every iteration we use a standard median-finding algorithm (e.g., see \cite{clrs})
to find the median root $\lambda$, and then resolve it by a call to $\mathcal{C}$
(each iteration halves the number of unresolved comparisons).

The above observation lies at the heart of the original
parametric search, as introduced by Megiddo~\cite{DBLP:journals/jacm/Megiddo83}.
Note that we can group the comparisons in such a way only if they
are \emph{independent} of each other.
To assure this, one chooses $\mathcal{A}$ to be
a \emph{parallel} sorting algorithm, running in time $T(n)$ on $P(n)$
processors. At every step of $\mathcal{A}$, there are $O(P(n))$
independent comparisons, and they can be resolved in time
$O\big(P(n) + \log(P(n))\cdot C(n)\big)$ according to the previous observation.
Resolving comparisons at all $T(n)$ steps of $\mathcal{A}$ takes time
$O\big(T(n)\cdot P(n) + T(n)\cdot\log(P(n))\cdot C(n)\big)$.
Simulating $\mathcal{A}$ on a sequential machine takes time
$O\big(T(n) P(n)\big)$.
Therefore, parametric search, as originally introduced, helps solve $B$ in time
$O\big(T(n)\cdot P(n) + T(n)\cdot\log(P(n))\cdot C(n)\big)$.

\ifFull
For example, the parallel sorting scheme of
Preparata~\cite{DBLP:journals/tc/Preparata78}
has $P(n) = O(n\log n)$ and $T(n) = O(\log n)$.
Taking it as $\mathcal{A}$
yields an $O(n\log^2n)$-time solution for problem $B$, given $C(n) = O(n)$.
\fi

\subsection{Cole's Improvement}

Cole~\cite{DBLP:journals/jacm/Cole87} was able to improve on
Megiddo's result by using a sorting network or an EREW parallel
sorting algorithm as
$\mathcal{A}$, and changing
the order of comparison resolution by assigning weights to
comparisons and resolving the \emph{median weighted comparison}
at each step.

In the case of a sorting network,
a straightforward notion of \emph{active} comparisons
and \emph{active} wires was introduced. Initially, all input wires
(and no others) are \emph{active}. A comparison is said to be
\emph{active} if it is not resolved and both its input wires are \emph{active}.
When active comparison gets resolved, its output wires now become
\emph{active}, possibly activating subsequent comparisons. Informally,
\emph{active} comparisons have not been resolved yet, but both of their
inputs are already determined.

\emph{Weight} is assigned to every comparison, being
equal to $4^{-j}$ for a comparison at depth $j$. The \emph{active weight}
is defined as the weight of all \emph{active} comparisons. The weighted
median comparison can be found in $O(n)$ time~\cite{DBLP:journals/ipl/Reiser78},
and resolving it automatically resolves a weighted half of the comparisons.

It is shown that for a sorting network of width $P(n)$ and depth $T(n)$,
or an EREW sorting algorithm with $P(n)$ processors and time $T(n)$,
the method of resolving weighted median comparison requires only
$O(T(n) + \log( P(n)))$ direct calls to $\mathcal{C}$.
Including simulation overhead, we solve $B$ in time
$O(P(n)\cdot T(n) + \big(T(n) + \log(P(n))\big)\cdot C(n))$.

\ifFull
For example, the AKS sorting network~\cite{Ajtai:1983:SCL:61981.61982}
has width $P(n) = O(n)$ and depth $T(n) = O(\log n)$. Taking it as
$\mathcal{A}$ and applying Cole's improvement yields an $O(n\log n)$-time
solution for problem $B$, given $C(n) = O(n)$.
\fi
This is completely impractical, however, as the bounds for the AKS network
have huge constant factors.
In a subsequent work~\cite{Cole:1988}, Cole shows that one can substitute
an EREW parallel sorting algorithm for the AKS network, which makes using his
optimization more implementable, but arguably still not practical, since
the existing optimal EREW parallel
sorting algorithms~\cite{Cole:1988,Goodrich:1996}
are still fairly intricate.

\cprotect\subsection{Applying \verb/quicksort/ to Parametric Search}

\ifFull
Despite its elegance, parametric search has rarely been used is practice.
As we observed above,
this reluctance to implement parametric search
algorithms can be attributed to the difficulty of simulating parallel
algorithms and/or large constant factors involved (especially
in the case of a hypothetical implementation based on the AKS sorting network).
\fi

Van~Oostrum and Veltkamp~\cite{DBLP:journals/comgeo/OostrumV04} have shown
that the \verb\quicksort\ algorithm~\cite{Hoare:1961}
can be used as $\mathcal{A}$\ifFull,
instead of complicated algorithms\fi.
Recall that in the randomized version of this algorithm we sort a
set of elements by picking one of them (called the \emph{pivot}) at random, and
recursively sorting elements smaller than the pivot and greater than the pivot.
A key observation here is that all
the comparisons with the pivot(s) at a given level of recursion are
independent of each other.
It leads to a practical algorithm, running
in $O(n\log n + \log^2n\cdot C(n))$
expected-time, for solving $B$ (it becomes
$O(n\log n + \log n\cdot C(n))$ under additional
assumption about distribution of the roots of polynomials). Comparisons
are resolved
by resolving the median comparison among unresolved
comparisons at the current level. As \verb\quicksort\ is
expected to have $O(\log n)$ levels of recursion, and $O(n)$
comparisons at each level can be resolved in time
$O(n + \log n\cdot C(n))$, time bound follows.
\ifFull
For example, this yields an $O(n\log^2n)$-time solution for problem
$B$, given $C(n) = O(n)$.
\fi

Cole's improvement cannot be applied in this case, because all
comparisons at one level have to be resolved before we even know
what comparisons have to be done at the next level (that is, we don't know
the splits around pivots until the very last comparison is resolved).


\section{Our Practical Version of Cole's Technique}
\label{sec_our_psearch}
In this section, we describe our algorithm engineering
framework for making Cole's parametric search
technique practical.
Our approach results in a randomized
parametric search algorithm with a
running time of $O(n\log n + \log n\cdot C(n))$, with high probability,
which makes no assumptions about the input.
Our framework involves resolving median-weight comparison,
according to a potential function based on
Cole-style weights assigned to comparisons
of a fairly obscure sorting algorithm, which we review next.

\cprotect\subsection{The \verb\boxsort\ Algorithm}
\label{boxsort_label}

We use the \verb\boxsort\ algorithm due to
Reischuk~\cite{DBLP:journals/siamcomp/Reischuk85} (see also~\cite{Motwani:1995:RA:211390})
as $\mathcal{A}$.
This algorithm
is based on an extension of the main idea behind
randomized \verb\quicksort\, namely
splitting elements around pivots and recursing
into subproblems. While \verb\quicksort\ randomly selects a single
pivot and recurses into two subproblems, \verb\boxsort\ randomly
selects $\sqrt{n}$ pivots and recurses into $\sqrt{n}+1$ subproblems
in a single stage. We think of it as a parallel algorithm, in the sense
that the recursive calls on the same level are independent of each other.
The pseudocode is shown in Algorithm~\ref{alg_boxsort}.

\begin{algorithm}
  \begin{algorithmic}[1]
    \ENSURE \emph{// $N$ -- original number of items}
    \REQUIRE \texttt{boxsort($A[i\ldots j]$)}
    \STATE $n\leftarrow (j-i+1)$
    \IF[\emph{$\qquad$// base case}]{$n < \log N$}
      \STATE sort $A[i\ldots j]$
    \ELSE
      \STATE randomly \emph{mark} $\sqrt{n}$ items
      \STATE sort the \emph{marked} items
      \STATE use the \emph{marked} items to split
        $A[i\ldots j]$  into subproblems $A_1, A_2, \ldots, A_{\sqrt{n}+1}$
      \FORALL{$i\leftarrow1\ldots \sqrt{n}+1$}
        \STATE \texttt{boxsort($A_i$)}
      \ENDFOR
    \ENDIF
  \end{algorithmic}
  \caption{\texttt{boxsort}}
  \label{alg_boxsort}
\end{algorithm}

Few details need further explanation. Sorting
in lines~3 and~6 is done in a brute-force manner, by comparing
all pairs of items, in time $O(n^2)$ in line~3,
and $O(n)$ in line~6 (note that since all these comparisons are
independent, they can all be realized in a single parallel step).

Once the \emph{marked} items are sorted in line~6,
splitting in line~7 is simply done by $n - \sqrt{n}$ independent binary searches
through the \emph{marked} items (to determine, for each unmarked element,
the subproblem where it lands).
It takes $O(n\log\sqrt{n})$ time (when realized in a sequential way).
Equivalently, we think of the sorted set of \emph{marked}
items as forming a perfectly balanced binary search tree.
Locating a destination subproblem for an item is then done by
\emph{routing} the item through this tree.
The tree has $\log\sqrt{n}$ levels, and all routing comparisons
are independent between different unmarked items.
Therefore, \emph{routing} can be realized in $\log\sqrt{n}$
parallel steps.

\subsection{Weighting Scheme}

Motivated by Cole's approach,
we assign weight to every \emph{active}
comparison, and resolve the
weighted median comparison in a single step. For simplicity, we identify each
comparison $x_i < x_j$ with a single comparison against the optimum
value, i.e., $\lambda_{ij} < \lambda^*$ for real $\lambda_{ij}$
(in essence, we assume that comparison polynomials have degree 1).
It is straightforward to extend the scheme for the case
of higher degrees of comparison polynomials.

It makes sense here to think of \verb\boxsort\ in a network-like fashion,
in order to understand how the weights are assigned to comparisons.
Here, nodes represent comparisons, and directed edges represent dependence
on previous comparisons. Furthermore, we imagine the network with
edges directed downward, and refer to edge sources as \emph{parents},
and destinations as \emph{children}.
Comparison becomes \emph{active} as soon as
all its dependencies become resolved (and stops when
it gets resolved).

Our ``network'' also contains nodes
for \emph{virtual comparisons}. These are not
real comparisons, and don't appear during actual execution of the
algorithm. Their sole purpose is to make it easy to assign weights to
\emph{real} comparisons once they become \emph{active} (we will later see
that, in fact, they are not necessary even for that; but they
make it easy to understand how the weights are computed).
When a \emph{virtual} comparison becomes \emph{active}, it is automatically
resolved (reflecting the fact that there is
no \emph{real} work assigned to a virtual comparison).

Contrary to Cole's weighting scheme for sorting networks, our scheme
does not rely only on comparison's depth when assigning weights.
In fact, different comparisons at the same level of
the network may have different weights. Weights are assigned
to comparisons (virtual or not) according to the following
\emph{weight rule}:

\begin{center}
\fbox{
  \parbox{2.8in}{
    When comparison $C$ of weight $w$ gets resolved
    and causes $m$ comparisons $C_1,\ldots,C_m$ to become
    \emph{active}, each of these comparisons
    gets weight $w/2m$.
  }
}
\end{center}

Informally, resolved comparison distributes half
of its weight among its newly activated children.
Each comparison gets its
weight only once, from its last resolved parent
(the scheme guarantees that all parents of a
comparison have equal weight).

\subsection{The Algorithm}
\label{algorithm_label}

Simulating a single recursive call of \verb\boxsort\
(including the \emph{virtual} parts) consists of the following steps.

\begin{enumerate}
  \item Randomly \emph{mark} $\sqrt{n}$ items.
  \item Create $\sqrt{n}\cdot(\sqrt{n}-1)/2 = O(n)$ comparisons for
    sorting \emph{marked} items.
  \textcolor{blue}{\item Construct a complete binary tree of virtual comparisons
    (comparisons from Step~2 are leaves).}
  \textcolor{red}{\item Create \emph{routing} trees from
    section~\ref{boxsort_label} for routing unmarked elements;
    make the root of each such tree depend on the root of the tree from Step~3.}
  \item Route items through the tree of \emph{marked} items;
  \textcolor{blue}{\item Construct a binary tree of virtual comparisons
    (leaves are last comparisons from \emph{routing} trees).}
  \item Split items into boxes
  \textcolor{red}{\item Assign weights for comparisons
    in the next level of recursion (after the items are split
    into boxes) by making them children of the root from Step~6.}
  \item Recurse into subproblems (simultaneously).
\end{enumerate}

Colors represent \emph{virtual} parts of the algorithm and correspond
to the pictorial explanation of the algorithm from the appendix
(Figure~\ref{fig_1}).
\textcolor{blue}{Blue} steps (3, 6) deal with trees of virtual comparisons,
while \textcolor{red}{red} steps (4, 8) represent relationships that
make \emph{real} comparisons depend on \emph{virtual} ones.
The idea behind \textcolor{blue}{blue} steps is to ensure synchronization
(that is, guarantee that all \emph{real} comparisons on the levels above have been
resolved), and \textcolor{red}{red} steps are there to ensure proper
assignment of weights.
Figure~\ref{fig_1} also shows
heights of the trees used and weights assigned to comparisons on levels
of the ``network''. For simplicity, it presents heights/weights as if there were
exactly $n$ (instead of $\sqrt{n}\cdot(\sqrt{n}-1)/2$)
comparisons between \emph{marked} items, and exactly $n$
(instead of $n - \sqrt{n}$)
unmarked items to be routed. The following discussion is also based on this assumption.

Steps 1 and 7 do not involve any comparisons, and they
do not affect weights.
Comparisons from Step~2 start with weight $w$.
The tree from \textcolor{blue}{Step~3} has height $\log n$, so its root,
according to the \emph{weight rule} gets weight $w / (2^{\log{n}}) = w/n$.
Dependencies introduced in \textcolor{red}{Step~4} between that root and
roots of the \emph{routing trees} cause their weight to be $w/2n^2$
(weight $w/n$ divided among $n$ comparisons). \emph{Routing trees} have height
$\log{\sqrt{n}}$, so the comparisons at their bottom have weight $w/2n^{2.5}$
($w/2n^2$ divided by $2^{\log{\sqrt{n}}}$, because, as the routing progresses,
the \emph{routing trees} get whittled down to paths, and resolving a routing comparison
\emph{activates} at most one new routing comparison.
\textcolor{blue}{Step~6} is essentially the same as
\textcolor{blue}{Step~3}, so the root of the second $\emph{virtual}$
tree gets weight $w/2n^{3.5}$. All initial comparisons in the subsequent
recursive calls (sorting of new \emph{marked} items and/or sorting
in the base case) depend on this root (\textcolor{red}{Step~8}),
and they are given weight $w/4n^{4.5}$
(much like in \textcolor{red}{Step~4}). The height of the dependence
network is $O(\log{n})$, and at any given moment the number of
currently \emph{active} comparisons does not exceed $n$.

From now on, comparisons are independent across different subproblems.
For subsequent subproblems, $n$ from the above discussion
gets substituted by $\hat{n}$, the size of the subproblem.
Since subproblem sizes may differ, comparisons
on the same level of the  network (general level, for the entire algorithm)
are no longer guaranteed to have same weights
(weights of comparisons belonging to the same subproblem are however equal).

The above discussion shows that, as advertised, we don't really need \emph{virtual}
comparisons in order to assign weights to \emph{real} comparisons, as these
depend only on $n$, the size of the subproblem.
Therefore, the actual algorithm only consists of steps 1, 2, 5, 7, and 9
and is the following.

\begin{enumerate}
  \item Randomly \emph{mark} $\sqrt{n}$ items
  \item Sort \emph{marked} items by comparing
    every pair in $O(n)$
    comparisons, each of weight $w$.
  \item When the last comparison finishes,
    \emph{activate} comparisons for routing through
    the tree of \emph{marked} items, each of weight $w/2n^2$.
  \item Route items through the trees, following the
    \emph{weight rule} when a comparison gets resolved.
  \item When the destination for the last item
    is determined, split items into boxes
    (no additional comparisons resolved here).
  \item Assign weight $w/4n^{4.5}$ to initial
    comparisons in new subproblems.
  \item Recurse into subproblems (simultaneously).
\end{enumerate}

\subsection{Analysis}
\label{analysis_label}

Assume that initially all comparisons at the highest level
were given weight 1. In this analysis, we also include
\emph{virtual} comparisons.
If the current \emph{active} weight (sum of weights
of all \emph{active} comparisons) is equal to $W$,
resolving the weighted-median comparison reduces
\emph{active} weight by at least $W/4$ (it resolves
comparisons of total weight $\geq W/2$, and
each resolved comparison passes only at most half of
its weight to its children). Thus,
the following lemma is proved identically
as Lemma~1 of~\cite{DBLP:journals/jacm/Cole87}
(a turn consists of resolving median weighted comparison
and assigning weights to newly activated comparisons).

\begin{lemma}
At the start of the (k+1)-st turn, active weight is bounded
from above by $(3/4)^kn$, for $k\geq0$.
\end{lemma}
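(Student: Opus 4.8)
The plan is to prove the bound by induction on $k$, using as the engine the one-step contraction already isolated in the paragraph preceding the statement: a single turn shrinks the active weight by a factor of at least $3/4$. I would structure the argument so that the base case merely records the initial configuration and the inductive step is exactly one application of this contraction.

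For the base case ($k=0$), I would observe that at the start of the first turn the only \emph{active} comparisons are the $\sqrt{n}(\sqrt{n}-1)/2$ comparisons that sort the \emph{marked} items (Step~2), each carrying weight $1$. Hence the active weight at the start of turn~$1$ is $\sqrt{n}(\sqrt{n}-1)/2 \le n = (3/4)^0 n$, which matches the claimed bound.

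For the inductive step, I would assume that the active weight $W$ at the start of the $(k+1)$-st turn satisfies $W \le (3/4)^k n$ and show that after one turn the active weight is at most $(3/4)W$, hence at most $(3/4)^{k+1}n$, the bound for the start of turn $(k+2)$. The heart is the following accounting. Let $R$ denote the total weight of the comparisons resolved in this turn. Because we resolve the weighted-median comparison with a single call to $\mathcal{C}$, and monotonicity of $B$ means that one such call simultaneously resolves every \emph{active} comparison whose root lies on the decided side of the median, the resolved set carries weight $R \ge W/2$. By the \emph{weight rule}, a resolved comparison of weight $w$ that activates $m$ children contributes total weight $m\cdot(w/2m) = w/2$ to the newly \emph{active} comparisons (and $0$ if it activates none); summing over all comparisons resolved in the turn, the weight injected into the \emph{active} set is at most $R/2$. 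Therefore the new active weight is at most $W - R + R/2 = W - R/2 \le W - W/4 = (3/4)W$, completing the induction.

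The step I expect to require the most care is the lower bound $R \ge W/2$ on the resolved weight. Establishing it relies on the median being taken with respect to the weights of the \emph{currently active} comparisons, together with the fact that resolving $\lambda_{\mathrm{med}} < \lambda^*$ through $\mathcal{C}$ decides, all at once, every \emph{active} comparison whose associated root $\lambda_{ij}$ falls on one side of $\lambda_{\mathrm{med}}$. I would make explicit that ``weighted median'' means each side carries at least half the active weight, so whichever side $\mathcal{C}$ decides accounts for weight at least $W/2$. Once this is pinned down, the remainder is the routine induction above, matching the structure of Lemma~1 of~\cite{DBLP:journals/jacm/Cole87}.
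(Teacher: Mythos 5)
Your proof is correct and follows essentially the same route as the paper, which establishes the one-step contraction (the weighted median resolves weight at least $W/2$, and each resolved comparison passes at most half its weight to newly activated children, so the active weight drops to at most $(3/4)W$) and then invokes Lemma~1 of Cole's paper for the induction you spell out. Your explicit base case and the accounting $W - R + R/2 \le (3/4)W$ are exactly the details the paper leaves implicit.
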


We also have the following.

\begin{lemma}
\label{weight_lemma}
Each comparison at depth $j$ has weight $\geq 4^{-j}$.
\end{lemma}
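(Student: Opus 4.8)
The plan is to prove the lemma by induction along the dependency paths of the ``network,'' tracking the exact weight produced by the \emph{weight rule}. Since all parents of a comparison carry equal weight and the activating parent contributes a factor $1/(2m)$, the weight of a comparison $C$ reached from a depth-$0$ comparison by a chain of activations with fan-outs $m_1,\dots,m_j$ (one per depth level, hence $j$ factors for a depth-$j$ comparison) is
\[
  w_C \;=\; \frac{1}{2^{\,j}\,\prod_{i=1}^{j} m_i}.
\]
Thus the claimed bound $w_C \ge 4^{-j} = 2^{-2j}$ is equivalent to the purely combinatorial statement $\prod_{i=1}^{j} m_i \le 2^{\,j}$, i.e.\ that along every activation chain the average of $\log_2 m_i$ is at most $1$. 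First I would note that this reformulation is independent of the chain traced, which follows from the \emph{weight rule}'s guarantee that all parents of a node share a common weight.

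Next I would classify the edges by fan-out. Every edge internal to a virtual binary tree, and every edge of a \emph{routing} tree, has $m=1$: a virtual node activates only its single tree-parent once its last child resolves, and a routing comparison activates at most one new routing comparison as the trees are whittled down to paths. The only large-fan-out edges are the two ``red'' synchronization edges per stage — the root of the first virtual tree activating the $\approx n$ routing roots (Step~4), and the root of the second virtual tree activating the $\approx n$ initial comparisons of the subproblems (Step~8) — each contributing $m=n$, i.e.\ $\log_2 m = \log n$.

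The main obstacle is exactly these high-fan-out edges: a single such edge divides the weight by $2n$ while advancing the depth by only one, so a naive per-edge induction (which would require $m\le 2$ at every step, as in Cole's comparator networks) breaks down. The structural fact that rescues the bound is that each high-fan-out edge is preceded, on \emph{every} dependency path, by a complete virtual binary tree of height $\log n$ built entirely from $m=1$ edges. I would make this precise with the potential $D_C = \log_2\!\big(\prod_{i} m_i\big) - j$, so that the lemma becomes $D_C \le 0$ everywhere: a unit-fan-out edge changes $D$ by $-1$, while a fan-out-$n$ edge changes it by $\log n - 1$. Since the $\log n$ tree edges that must be traversed just before a fan-out edge first drive $D$ down by $\log n$, the subsequent increase of $\log n - 1 < \log n$ cannot lift $D$ above its value at the top of that tree. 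Walking $D$ through one stage (first virtual tree, first red edge, routing trees, second virtual tree, second red edge) shows $D$ never turns positive; I expect a short calculation to give a net decrease of $2 + \tfrac12\log n$ per stage.

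Finally I would close the induction over the recursion of \texttt{boxsort}. The topmost comparisons have weight $1 = 4^{0}$ at depth $0$, so $D = 0$ there. Because the stage analysis shows $D$ only decreases from the start of a stage to the start of each child stage, and because a subproblem of size $\hat n \ge 1$ again satisfies $2\log\hat n \le \tfrac52\log\hat n$ (keeping $D\le 0$ with room to spare in every recursive stage), the invariant $D_C \le 0$ — equivalently $w_C \ge 4^{-j}$ — propagates to every comparison, including those created in the base case. This is the analogue, for the irregular \texttt{boxsort} network, of the fact that in Cole's comparator network every node has fan-out at most two; here that regularity holds only on average, which is precisely what the virtual trees were inserted to guarantee.
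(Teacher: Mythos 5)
Your proof is correct and is essentially the paper's own argument in logarithmic form: the paper likewise does an induction on the \texttt{boxsort} recursion depth and walks the weight level by level through one stage ($w \to w/2^i \to w/n \to w/2n^2 \to \dots$), observing that the height-$\log n$ virtual tree supplies exactly the factor $n$ needed to absorb the fan-out-$n$ activation and that routing edges only help since they halve rather than quarter the weight --- which is precisely your statement that $D$ drops by $\log n$ before rising by $\log n - 1$ and that unit-fan-out edges decrease $D$. The only cosmetic difference is your additive potential $D_C$ versus the paper's direct multiplicative bounds such as $w/2^i \ge 4^{-(k+i)}$ and $w/2n^2 \ge 4^{-(k+\log n+1)}$.
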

\begin{proof}
We prove this by induction on the depth of the \verb\boxsort\
recursion.
Assume that the current recursive call operates on
a subproblem of size $n$, comparisons at the beginning
of the current recursive call have depth $k$ and weight $w$.
By inductive assumption, $w \geq 4^{-k}$.

Consider comparisons in the current recursive
call. Comparisons at depth $i$ in the first tree
of \emph{virtual} comparisons (global depth $k+i$)
have weight
$w/2^i \geq 4^{-k}\cdot2^{-i} \geq 4^{-(k+i)}$.
The last of them has (local) depth $\log n$ and weight $w/n$.
It then spreads half of its weight to $n$ comparisons at
depth $\log n +1$ (global depth $k+\log n+1$), setting their weight to
$w/2n^2 \geq w/4n^2 = w/4^{\log n + 1} \geq 4^{-(k + \log n + 1)}$.
The same reasoning follows for the case of the second
\emph{virtual} tree and recursive split, while routing
through the tree of sorted \emph{marked} items
always decreases weight by a factor of 2 for the
next level instead of 4 (making the result even stronger).

To finish the proof, note that the base
case is realized in the very first call
to the algorithm,
since comparison at depth 0 has weight $1 = 1/4^0$.
\end{proof}

Lemma~\ref{weight_lemma} allows us to prove the following
lemma exactly as Lemma 2 of \cite{DBLP:journals/jacm/Cole87}.

\begin{lemma}
For $k\geq5(j+1/2\cdot\log n)$, during the (k+1)-st turn
there are no active comparisons at depth $j$.
\end{lemma}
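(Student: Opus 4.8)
The plan is to prove the statement by contradiction, feeding the two lemmas already in hand against each other. Suppose that, contrary to the claim, some comparison at depth $j$ is \emph{active} during the $(k+1)$-st turn, for some $k \geq 5(j + \tfrac12\log n)$. A comparison that is active during a turn is in particular active at the \emph{start} of that turn, so its weight is one of the summands in the active weight $W$ at the start of turn $k+1$. By Lemma~\ref{weight_lemma}, this single comparison already contributes weight at least $4^{-j}$, so $W \geq 4^{-j}$. On the other hand, the active-weight bound (the first lemma above) gives $W \leq (3/4)^k n$. Chaining these, the whole argument collapses to showing that the hypothesis on $k$ makes $4^{-j} \leq (3/4)^k n$ impossible.

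It therefore remains to check that $k \geq 5(j + \tfrac12\log n)$ forces $(3/4)^k n < 4^{-j}$, equivalently $(4/3)^k > 4^j n$. I would verify this by splitting the exponent: writing $n = 2^{\log n}$ and taking $k = 5j + \tfrac52\log n$ (the threshold; larger $k$ only strengthens the inequality since $(4/3)^k$ is increasing), we get
\[
  (4/3)^k = \bigl[(4/3)^5\bigr]^{\,j}\cdot\bigl[(4/3)^{5/2}\bigr]^{\,\log n}.
\]
Since $(4/3)^5 = 1024/243 > 4$ and $(4/3)^{5/2} > 2$, each bracketed base strictly exceeds the corresponding base on the right ($4$ and $2$), so $(4/3)^k > 4^j\,2^{\log n} = 4^j n$, as needed. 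This is exactly where the constant $5$ enters: any coefficient on $j$ must exceed $\log_{4/3}4 \approx 4.82$, so $5$ is the least integer that works, and the matching coefficient $\tfrac52$ on $\log n$ clears the required $\log_{4/3}2 \approx 2.41$. That explains the otherwise mysterious factor of $5$ in the statement.

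The argument is short and there is no genuinely hard computation — this is precisely Cole's Lemma~2 transported to our weighting scheme, which is why the excerpt asserts it is proved "exactly as Lemma~2 of~\cite{DBLP:journals/jacm/Cole87}." The one point I expect to need care is the bookkeeping in the first paragraph: I must make sure that "active during the $(k+1)$-st turn" legitimately licenses invoking the \emph{start-of-turn} bound. This is fine because a comparison becomes active only at the instant its last parent is resolved, which occurs at the end of some earlier turn; hence anything active during turn $k+1$ was already active, and already counted in $W$, at its start. With that observation, the contradiction $4^{-j} \leq W \leq (3/4)^k n < 4^{-j}$ closes the proof.
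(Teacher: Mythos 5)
Your proof is correct and is exactly the argument the paper intends: it does not spell out a proof but appeals to Lemma~2 of Cole's paper, which is precisely this weight-counting contradiction combining the active-weight decay bound $(3/4)^k n$ with the per-comparison lower bound $4^{-j}$, and your arithmetic verifying $(4/3)^5 > 4$ and $(4/3)^{5/2} > 2$ is the right justification for the constant $5$. No issues.
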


This leads to the following corollary.

\begin{lemma}
\label{depth_lemma}
If the \emph{network} for \verb\boxsort\ has height $f(n)$,
$O(f(n) + \log n)$ rounds of resolving the median-weight
comparison suffice to resolve every comparison in the network.
\end{lemma}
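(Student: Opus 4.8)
The plan is to derive this statement directly from the preceding lemma by specializing it to the maximum depth of the network and then exploiting monotonicity of its threshold in $j$. Since the network has height $f(n)$, every comparison sits at some depth $j$ with $0 \le j \le f(n)$. First I would set $j = f(n)$ in the previous lemma, obtaining that for $k \ge 5\bigl(f(n) + \tfrac12\log n\bigr)$ there are no active comparisons at the deepest level during the $(k+1)$-st turn. Writing $k^* = 5\bigl(f(n) + \tfrac12\log n\bigr)$, I would note that $k^* = O(f(n) + \log n)$, which is precisely the round count claimed.

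The next step is to upgrade ``no active comparisons at depth $f(n)$'' to ``no active comparisons at any depth.'' The threshold $5\bigl(j + \tfrac12\log n\bigr)$ is nondecreasing in $j$, so for every depth $j \le f(n)$ we have $5\bigl(j + \tfrac12\log n\bigr) \le k^*$. Applying the previous lemma separately at each such $j$ shows that during turn $k^*+1$ there are simultaneously no active comparisons at any depth; hence no comparison in the network is active at that moment.

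I would then argue that having no active comparisons forces every comparison to be resolved. Suppose some comparison $C$ were still unresolved during turn $k^*+1$. By the definition of \emph{active} (a comparison is active exactly when it is unresolved but all its parents are resolved), $C$ is either active itself or has an unresolved parent. Following a chain of unresolved parents upward—a chain that must terminate, since depth strictly decreases and the depth-$0$ comparisons are active from the very start—I reach an unresolved comparison all of whose parents are resolved, i.e.\ an active comparison at some depth $j' \le f(n)$. This contradicts the conclusion of the previous paragraph. Therefore every comparison is resolved after $O(f(n) + \log n)$ rounds, as claimed.

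The main obstacle is exactly this last bridging step: the preceding lemma controls only \emph{active} comparisons at a fixed depth, and a priori a comparison could linger unresolved merely because it has not yet been activated. The chaining-up-the-dependence-network argument, together with the fact that the depth-$0$ comparisons are active at the outset, is what rules this out and converts the per-level statement into a global ``everything resolved'' guarantee. Everything else is routine bookkeeping: verifying that the threshold is monotone in $j$ and that $k^* = O(f(n) + \log n)$.
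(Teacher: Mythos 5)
Your proof is correct and is exactly the intended derivation: the paper states this lemma as an immediate corollary of the preceding depth lemma without writing out a proof, and your argument (specialize to $j=f(n)$, use monotonicity of the threshold in $j$, then chain up through unresolved parents to convert ``no active comparisons'' into ``all comparisons resolved'') fills in precisely the steps the paper leaves implicit. The bridging step you flag is a genuine point worth making explicit, but it does not constitute a departure from the paper's approach.
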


We also have
the following fact about \verb\boxsort\.

\begin{lemma} (Theorem 12.2~of~\cite{Motwani:1995:RA:211390})
\label{book_lemma}
There is a constant $b > 0$ such that
\verb\boxsort\ terminates in $O(\log n)$ parallel steps
with probability at least $1 - \exp(-\log^bn)$.
\end{lemma}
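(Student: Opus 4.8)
The plan is to bound the parallel running time by analysing the recursion tree of \verb\boxsort\ and showing that, with the stated probability, every root-to-leaf path is both short and cheap. First I would fix the per-stage cost: on a subproblem of size $m$, one stage sorts the $\sqrt{m}$ marked items in a single parallel comparison step and then routes the unmarked items through the balanced search tree on the marked items, which has height $\log\sqrt{m}$. Hence a single stage contributes $O(\log m)$ parallel steps, and since the subproblems at a common recursion level are independent and run simultaneously, the total parallel time equals the maximum, over all root-to-leaf paths, of the sum of the per-node costs $O(\log m_t)$ along the path, where $m_0 = n \geq m_1 \geq m_2 \geq \cdots$ are the sizes encountered.

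The heart of the argument is a concentration bound on how the sizes shrink from one level to the next. When $\sqrt{m}$ pivots are drawn uniformly at random from the $m$ items, they partition the remaining $m - \sqrt{m}$ elements into $\sqrt{m}+1$ groups, and the size of any one group is a sum of indicator variables for elements falling in a fixed gap. I would apply a Chernoff bound to each gap and then a union bound over the $\sqrt{m}+1$ gaps to show that, except with small probability, every child subproblem has size at most $c\sqrt{m}\log m$ for a suitable constant $c$. Thus a ``good'' stage roughly square-roots the subproblem size, up to a polylogarithmic factor.

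Conditioning on all stages being good, the sizes along any path satisfy $m_{t+1} \leq c\sqrt{m_t}\log m_t$, so $\log m_t$ decreases by essentially a factor of two at each level; the recursion therefore reaches the base-case size $\log N$ within $O(\log\log n)$ levels, and the path cost $\sum_t O(\log m_t)$ is dominated by a geometric series and sums to $O(\log n)$. Taking a union bound of the per-stage failure probabilities over all nodes of the recursion tree then yields the claimed overall success probability, and the subexponential form $1 - \exp(-\log^b n)$ emerges from tracking how the Chernoff guarantee degrades as the subproblems shrink.

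The step I expect to be the main obstacle is precisely this last point: near the leaves the subproblems have size only polylogarithmic in $n$, so the Chernoff bounds there give concentration of the form $\exp(-\Omega(\mathrm{polylog}))$ rather than the $1 - n^{-\Omega(1)}$ one obtains at the top levels. Combining these weak tail bounds across all recursion-tree nodes, while ensuring the union bound does not overwhelm the small per-node failure probabilities, is the delicate part, and it is exactly what forces the final guarantee into the $1 - \exp(-\log^b n)$ form rather than an inverse-polynomial one. Since this statement is quoted as Theorem~12.2 of~\cite{Motwani:1995:RA:211390}, I would ultimately defer the tightest version of this tail analysis to that reference.
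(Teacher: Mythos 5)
First, a point of comparison: the paper offers no proof of this lemma at all --- it is imported verbatim as Theorem~12.2 of \cite{Motwani:1995:RA:211390} --- so your decision to defer the tightest tail analysis to that reference is, in effect, exactly what the paper does. Your high-level skeleton is also the right picture and matches the standard analysis of \texttt{boxsort}: a stage on a size-$m$ box costs $O(\log m)$ parallel steps, a good split reduces the size to roughly $\sqrt{m}\cdot\mathrm{polylog}(m)$, hence $O(\log\log n)$ levels and a geometrically decaying per-path cost summing to $O(\log n)$.

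However, the specific mechanism you propose for the probability bound --- condition on \emph{every} split in the recursion tree being good, then union-bound the per-stage failure probabilities over all nodes --- genuinely fails, and not merely in the sense of being ``delicate.'' The event that a fixed block of $c\sqrt{m}\log m$ consecutive elements (in sorted order) contains none of the $\sqrt{m}$ pivots has probability about $(1-1/\sqrt{m})^{c\sqrt{m}\log m}=m^{-\Theta(c)}$, so the best per-split guarantee available is polynomially small in $m$, not exponentially small. Near the base case $m$ can be as small as $\log N$, the per-node failure probability is then only $(\log N)^{-O(1)}$, and there can be $\Theta(N/\log N)$ such nodes; the union bound over all nodes of the recursion tree therefore exceeds $1$ and certifies nothing. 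The actual argument cannot condition on all splits succeeding: it must analyze each root-to-leaf path, exploit the fact that a failed split never increases the box size, and bound the tail of the \emph{number} of failed splits accumulated along a path using the independence of successive splits on that path --- this per-path accounting of repeated failures at the small scales is where the $1-\exp(-\log^b n)$ form actually comes from, not from a node-wise union bound. If you intend to reproduce the proof rather than cite it, that is the missing ingredient.
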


Originally, \verb\boxsort\ requires $O(\log n)$ parallel steps
to execute a single recursive call for a problem of size $n$.
We noted that the dependence network for a single recursive call
in our simulation has height $O(\log n)$ for a problem of size $n$
as well. This means that Lemma~\ref{book_lemma} applies here
and proves that,
with high probability, the dependence network for the
entire simulation has height $O(\log n)$.

Combining that with Lemma~\ref{depth_lemma} and the
observation that any level in the dependence network
contains $O(n)$ comparisons, we get the following.

\begin{theorem}
\label{main_thm}
With high probability, the presented algorithm requires
$O(\log n)$ calls to $\cal C$, yielding an
$O(n\log n +\log n\cdot C(n))$ time parametric search
solution to problem $B$.
\end{theorem}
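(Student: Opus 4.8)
The plan is to assemble the structural lemmas already established—the weight-decay bound together with Lemmas~\ref{weight_lemma} and~\ref{depth_lemma}—with the high-probability bound on the recursion depth of \texttt{boxsort} (Lemma~\ref{book_lemma}), and then to tally the cost of the sequential simulation. First I would bound the height of the dependence network. The crucial point, already observed in the construction, is that the network built for a single recursive call on a subproblem of size $\hat{n}$ has height $O(\log\hat{n})$: each of the two virtual trees contributes $O(\log\hat{n})$ levels, the routing trees contribute $O(\log\sqrt{\hat{n}}) = O(\log\hat{n})$ levels, and the red dependence steps add only $O(1)$ levels. Thus the height of the full network is controlled by exactly the recursion structure that governs the parallel running time of \texttt{boxsort}, one stage of which likewise takes $O(\log\hat{n})$ parallel steps. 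Invoking Lemma~\ref{book_lemma}, \texttt{boxsort} finishes in $O(\log n)$ parallel steps with probability at least $1 - \exp(-\log^b n)$ (which, for $b>1$, exceeds $1 - n^{-c}$ for every constant $c$), so with the same high probability the dependence network of the whole simulation has height $f(n) = O(\log n)$.

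Next I would substitute $f(n) = O(\log n)$ into Lemma~\ref{depth_lemma}, which guarantees that $O(f(n) + \log n) = O(\log n)$ rounds of resolving the weighted-median comparison suffice to resolve every comparison in the network. Since each round resolves its weighted-median comparison with a single call to $\cal C$—and that one call simultaneously resolves every comparison whose root lies on the same side of $\lambda^*$—the algorithm issues only $O(\log n)$ calls to $\cal C$ with high probability, accounting for $O(\log n\cdot C(n))$ of the running time.

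Finally I would account for the work outside the calls to $\cal C$. Because the active set never exceeds $n$ comparisons, each round locates its weighted-median comparison in $O(n)$ time via a linear-time weighted-median routine, contributing $O(n\log n)$ over all rounds. The remaining simulation overhead—emitting the $O(n\log n)$ comparisons spread over all levels, maintaining the active set as comparisons are activated and resolved (each exactly once), and performing the routing and splitting—is likewise $O(n\log n)$, matching the sequential cost of \texttt{boxsort}. Summing the three contributions gives the claimed $O(n\log n + \log n\cdot C(n))$ bound.

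I expect the principal obstacle to be the first step: transferring the high-probability bound of Lemma~\ref{book_lemma}, which speaks of the \emph{parallel running time} of \texttt{boxsort}, to a bound on the \emph{height of our dependence network}, a distinct object that threads virtual trees and routing paths through every recursive call. The transfer rests on the per-call height being $\Theta(\log\hat{n})$ and on the recursion tree of the simulation coinciding with that of \texttt{boxsort}, so that the same concentration argument applies verbatim; making this correspondence exact, rather than merely asymptotic, is the delicate part.
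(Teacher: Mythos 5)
Your proposal is correct and follows essentially the same route as the paper: bound the per-call network height by $O(\log\hat n)$ so that Lemma~\ref{book_lemma} transfers to give total height $O(\log n)$ with high probability, apply Lemma~\ref{depth_lemma} to get $O(\log n)$ weighted-median rounds (hence $O(\log n)$ calls to $\cal C$), and charge $O(n)$ per round for weighted-median finding plus $O(n\log n)$ simulation overhead using the fact that each level holds $O(n)$ comparisons. The ``delicate part'' you flag at the end is precisely the step the paper also relies on, and it handles it with the same observation you make.
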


\section{Conclusion} \label{sec_conclusion}
We have introduced a practical version of Cole's optimization
of the parametric search technique.
Our method results in a randomized algorithm whose running time matches
that of using Cole's technique, with high probability, while
being easily implementable.
We have implemented it and, based on experimentation performed
on some geometric problems (details in the appendix),
showed that our approach is competitive with
the previous practical parametric search technique of
van~Oostrum and Veltkamp~\cite{DBLP:journals/comgeo/OostrumV04}, while
having superior asymptotic performance guarantees.

\bibliographystyle{abbrv}
\bibliography{refs}

\newpage
\appendix

\section{Appendix}

\subsection{Pictorial Illustration of the Algorithm}

\begin{figure*}[ht] 
  \begin{center}
    \begin{tikzpicture}
      [input/.style={inner sep=1pt,circle,fill=black},
       input_virt/.style={inner sep=1pt,circle,fill=blue},
      input2/.style={inner sep=0.75pt,circle,fill=black},
      input2_virt/.style={inner sep=0.75pt,circle,fill=blue},
      spread/.style={red,very thin},
      virt/.style={blue,dashed},
      triangle/.style={black,fill=black!45},
      line1/.style={very thin,loosely dashed},
      line2/.style={very thin,dotted},
      empty/.style={inner sep=0pt}]

      \node at (0,0.5) {\scriptsize weights};


      \node[text width=2.7cm,text centered] (desc) at (11,0) {\scriptsize $n$ comparisons for $\sqrt{n}$ \emph{marked} elements};

      \path (1.5,0.0) node (a) [input] {}
            (2.0,0.0) node (b) [input] {}
            (2.5,0.0) node (c) [input] {}
            (3.0,0.0) node (d) [input] {}
            (3.5,0.0) node (e) [input] {}
            (4.0,0.0) node (f) [input] {}
            (5.25,0) node  {$\ldots$}
            (6.5,0.0) node (g) [input] {}
            (7.0,0.0) node (h) [input] {}
            (7.5,0.0) node (i) [input] {}
            (8.0,0.0) node (j) [input] {}
            (8.5,0.0) node (k) [input] {}
            (9.0,0.0) node (l) [input] {};

      \node (w) at (0,0) {\scriptsize $w$};
      \draw[line2] (w) -- (a);
      \draw[line2] (l) -- (desc);

      \node[input_virt] (ab) at (1.75,-0.25) {};
      \node[input_virt] (cd) at (2.75,-0.25) {};
      \node[input_virt] (ef) at (3.75,-0.25) {};
      \node[input_virt] (gh) at (6.75,-0.25) {};
      \node[input_virt] (ij) at (7.75,-0.25) {};
      \node[input_virt] (kl) at (8.75,-0.25) {};

      \draw[line1] (0.8,-0.25) -- (ab);
      \node (w) at (0,-0.25) {\scriptsize $w/2$};
      \draw[line2] (w) -- (0.8,-0.25);

      \coordinate (start) at (0.8,-0.25);

      \draw[virt] (a) -- (ab);
      \draw[virt] (b) -- (ab);
      \draw[virt] (c) -- (cd);
      \draw[virt] (d) -- (cd);
      \draw[virt] (e) -- (ef);
      \draw[virt] (f) -- (ef);
      \draw[virt] (g) -- (gh);
      \draw[virt] (h) -- (gh);
      \draw[virt] (i) -- (ij);
      \draw[virt] (j) -- (ij);
      \draw[virt] (k) -- (kl);
      \draw[virt] (l) -- (kl);

      \node (t) [input_virt] at (5.25,-4.0) {};

      \draw[line1] (0.8,-4.0) -- (t);
      \coordinate (end) at (0.8,-4.0);

      { [>=stealth']
        \draw[<->] (start) -- (end);
        \node[fill=white] at ($(start)!0.5!(end)$) {\scriptsize $\log n$};
        \node (w) at ($(end) - (0.8,0)$) {\scriptsize $w/n$};
        \node[blue] (desc) at ($(start)!0.5!(end) + (10.2,0)$) {\scriptsize \emph{virtual} comparisons};
        \node[blue] (desc) at ($(w) + (11,0)$) {\scriptsize last \emph{virtual} comparison};
        \draw[line2] (t) -- (desc);
        \draw[line2] (w) -- (end);
      }

      \draw[virt] (ab) -- (kl) -- (t) -- (ab);

      { [yshift=-4.75cm]; 
        \path (1.5,0.0) node (a) [input] {}
              (2.0,0.0) node (b) [input] {}
              (2.5,0.0) node (c) [input] {}
              (3.0,0.0) node (d) [input] {}
              (3.5,0.0) node (e) [input] {}
              (4.0,0.0) node (f) [input] {}
              (5.25,0) node  {$\ldots$}
              (6.5,0.0) node (g) [input] {}
              (7.0,0.0) node (h) [input] {}
              (7.5,0.0) node (i) [input] {}
              (8.0,0.0) node (j) [input] {}
              (8.5,0.0) node (k) [input] {}
              (9.0,0.0) node (l) [input] {};

        \draw[line1] (0.8,0.0) -- (a);
        \node (w) at (0.0,0.0) {\scriptsize $w/2n^2$};
        \draw[line2] (w) -- (0.8,0.0);

        \coordinate (start) at (0.8,0.0);

        \draw[spread] (t) -- (a);
        \draw[spread] (t) -- (b);
        \draw[spread] (t) -- (c);
        \draw[spread] (t) -- (d);
        \draw[spread] (t) -- (e);
        \draw[spread] (t) -- (f);
        \draw[spread] (t) -- (g);
        \draw[spread] (t) -- (h);
        \draw[spread] (t) -- (i);
        \draw[spread] (t) -- (j);
        \draw[spread] (t) -- (k);
        \draw[spread] (t) -- (l);

        \node [below=-0.1cm of a] {\tiny $\Downarrow$};
        \node [below=-0.1cm of b] {\tiny $\Downarrow$};
        \node [below=-0.1cm of c] {\tiny $\Downarrow$};
        \node [below=-0.1cm of d] {\tiny $\Downarrow$};
        \node [below=-0.1cm of e] {\tiny $\Downarrow$};
        \node [below=-0.1cm of f] {\tiny $\Downarrow$};
        \node [below=-0.1cm of g] {\tiny $\Downarrow$};
        \node [below=-0.1cm of h] {\tiny $\Downarrow$};
        \node [below=-0.1cm of i] {\tiny $\Downarrow$};
        \node [below=-0.1cm of j] {\tiny $\Downarrow$};
        \node [below=-0.1cm of k] {\tiny $\Downarrow$};
        \node [below=-0.1cm of l] {\tiny $\Downarrow$};

        { [yshift=-0.3cm]
          \coordinate (a) at (1.5,0.0);
          \coordinate (b) at (2.0,0.0);
          \coordinate (c) at (2.5,0.0);
          \coordinate (d) at (3.0,0.0);
          \coordinate (e) at (3.5,0.0);
          \coordinate (f) at (4.0,0.0);
          \coordinate (g) at (6.5,0.0);
          \coordinate (h) at (7.0,0.0);
          \coordinate (i) at (7.5,0.0);
          \coordinate (j) at (8.0,0.0);
          \coordinate (k) at (8.5,0.0);
          \coordinate (l) at (9.0,0.0);
          \coordinate (la) at (1.3,-0.5);
          \coordinate (ra) at (1.7,-0.5);
          \coordinate (lb) at (1.8,-0.5);
          \coordinate (rb) at (2.2,-0.5);
          \coordinate (lc) at (2.3,-0.5);
          \coordinate (rc) at (2.7,-0.5);
          \coordinate (ld) at (2.8,-0.5);
          \coordinate (rd) at (3.2,-0.5);
          \coordinate (le) at (3.3,-0.5);
          \coordinate (re) at (3.7,-0.5);
          \coordinate (lf) at (3.8,-0.5);
          \coordinate (rf) at (4.2,-0.5);
          \coordinate (lg) at (6.3,-0.5);
          \coordinate (rg) at (6.7,-0.5);
          \coordinate (lh) at (6.8,-0.5);
          \coordinate (rh) at (7.2,-0.5);
          \coordinate (li) at (7.3,-0.5);
          \coordinate (ri) at (7.7,-0.5);
          \coordinate (lj) at (7.8,-0.5);
          \coordinate (rj) at (8.2,-0.5);
          \coordinate (lk) at (8.3,-0.5);
          \coordinate (rk) at (8.7,-0.5);
          \coordinate (ll) at (8.8,-0.5);
          \coordinate (rl) at (9.2,-0.5);

          \draw[triangle] (a) -- (la) -- (ra) -- cycle;
          \draw[triangle] (b) -- (lb) -- (rb) -- (b);
          \draw[triangle] (c) -- (lc) -- (rc) -- (c);
          \draw[triangle] (d) -- (ld) -- (rd) -- (d);
          \draw[triangle] (e) -- (le) -- (re) -- (e);
          \draw[triangle] (f) -- (lf) -- (rf) -- (f);
          \draw[triangle] (g) -- (lg) -- (rg) -- (g);
          \draw[triangle] (h) -- (lh) -- (rh) -- (h);
          \draw[triangle] (i) -- (li) -- (ri) -- (i);
          \draw[triangle] (j) -- (lj) -- (rj) -- (j);
          \draw[triangle] (k) -- (lk) -- (rk) -- (k);
          \draw[triangle] (l) -- (ll) -- (rl) -- (l);

          \draw[line1] (0.8,-0.5) -- (la);
          \draw[line2] (0.0,-0.5) -- (0.8,-0.5);

          \coordinate (end) at (0.8,-0.5);
          \node[fill=white] at (0.0,-0.5) {\scriptsize $w/2n^{2.5}$};
          { [>=stealth']
            \draw[<->] (start) -- (end);
            \node[fill=white] at ($(start)!0.5!(end)$) {\scriptsize $\log \sqrt{n}$};
            \node[text width=2.6cm,text centered] at ($(start)!0.5!(end) + (10.2,0)$)
              {\scriptsize routing all elements through \emph{marked} elements};
          }

          { [yshift=-0.8cm] 
            \path (1.5,0.0) node (a) [input_virt] {}
                  (2.0,0.0) node (b) [input_virt] {}
                  (2.5,0.0) node (c) [input_virt] {}
                  (3.0,0.0) node (d) [input_virt] {}
                  (3.5,0.0) node (e) [input_virt] {}
                  (4.0,0.0) node (f) [input_virt] {}
                  (6.5,0.0) node (g) [input_virt] {}
                  (7.0,0.0) node (h) [input_virt] {}
                  (7.5,0.0) node (i) [input_virt] {}
                  (8.0,0.0) node (j) [input_virt] {}
                  (8.5,0.0) node (k) [input_virt] {}
                  (9.0,0.0) node (l) [input_virt] {};

            \draw[line1] (0.8,0.0) -- (a);
            \node (w) at (0.0,0.0) {\scriptsize $w/2n^{2.5}$};
            \draw[line2] (w) -- (0.8,0.0);

            \coordinate (start) at (0.8,0.0);

            \node [above=-0.1cm of a] {\tiny $\Downarrow$};
            \node [above=-0.1cm of b] {\tiny $\Downarrow$};
            \node [above=-0.1cm of c] {\tiny $\Downarrow$};
            \node [above=-0.1cm of d] {\tiny $\Downarrow$};
            \node [above=-0.1cm of e] {\tiny $\Downarrow$};
            \node [above=-0.1cm of f] {\tiny $\Downarrow$};
            \node [above=-0.1cm of g] {\tiny $\Downarrow$};
            \node [above=-0.1cm of h] {\tiny $\Downarrow$};
            \node [above=-0.1cm of i] {\tiny $\Downarrow$};
            \node [above=-0.1cm of j] {\tiny $\Downarrow$};
            \node [above=-0.1cm of k] {\tiny $\Downarrow$};
            \node [above=-0.1cm of l] {\tiny $\Downarrow$};

            \node[input_virt] (ab) at (1.75,-0.25) {};
            \node[input_virt] (cd) at (2.75,-0.25) {};
            \node[input_virt] (ef) at (3.75,-0.25) {};
            \node[input_virt] (gh) at (6.75,-0.25) {};
            \node[input_virt] (ij) at (7.75,-0.25) {};
            \node[input_virt] (kl) at (8.75,-0.25) {};

            \draw[virt] (a) -- (ab);
            \draw[virt] (b) -- (ab);
            \draw[virt] (c) -- (cd);
            \draw[virt] (d) -- (cd);
            \draw[virt] (e) -- (ef);
            \draw[virt] (f) -- (ef);
            \draw[virt] (g) -- (gh);
            \draw[virt] (h) -- (gh);
            \draw[virt] (i) -- (ij);
            \draw[virt] (j) -- (ij);
            \draw[virt] (k) -- (kl);
            \draw[virt] (l) -- (kl);

            \node (t) [input_virt] at (5.25,-4.0) {};

            \draw[line1] (0.8,-4.0) -- (t);
            \node (w) at (0.0,-4.0) {\scriptsize $w/2n^{3.5}$};
            \draw[line2] (w) -- (0.8,-4.0);
            \coordinate (end) at (0.8,-4.0);
            { [>=stealth']
              \draw[<->] (start) -- (end);
              \node[fill=white] at ($(start)!0.5!(end)$) {\scriptsize $\log n$};
              \node[blue] (desc) at ($(start)!0.5!(end) + (10.2,0)$) {\scriptsize \emph{virtual} comparisons};
              \node[blue] (desc) at ($(w) + (11,0)$) {\scriptsize last \emph{virtual} comparison};
              \draw[line2] (t) -- (desc);
            }

            \draw[virt] (a) -- (t) -- (l) -- (a);
            \node [fill=white] at (5.25,0) {$\ldots$};

            { [yshift=-5.0cm] 
              \draw[green,thick,dashed,fill=green!15] (3.25,0.0) circle [x radius=1.0cm, y radius=0.2cm];

              \path (1.0,0.0) node (ls) [input] {}
                    (2.5,0.0) node [input] {}
                    (3.25,0.0) node (rs) {$\hat{n}$}
                    (4.0,0.0) node [input] {}
                    (5.25,0.0) node  {$\ldots$}
                    (6.5,0.0) node [input] {}
                    (8.0,0.0) node [input] {}
                    (9.5,0.0) node [input] {};

              \node (w) at (0.0,0.0) {\scriptsize $w'=w/4n^{4.5}$};
              \node[red] (desc) at (11,0) {\scriptsize split into \emph{boxes}};
              \draw[line2] (w) -- (1.0,0.0);

              \draw[spread] (t) -- (1.2,0.0);
              \draw[spread] (t) -- (1.6,0.0);
              \draw[spread] (t) -- (2.0,0.0);
              \draw[spread] (t) -- (3.2,0.2);
              \draw[spread] (t) -- (3.6,0.2);
              \draw[spread] (t) -- (4.0,0.2);
              \draw[spread] (t) -- (4.0,0.0);
              \draw[spread] (t) -- (6.5,0.0);
              \draw[spread] (t) -- (6.9,0.0);
              \draw[spread] (t) -- (7.3,0.0);
              \draw[spread] (t) -- (7.7,0.0);
              \draw[spread] (t) -- (8.1,0.0);
              \draw[spread] (t) -- (8.5,0.0);
              \draw[spread] (t) -- (8.9,0.0);
              \draw[spread] (t) -- (9.3,0.0);

              { [yshift=-1.5cm] 
                \draw[green,thick,dashed,fill=green!15] (5.25,0.0) circle [x radius=1.5cm, y radius=0.4cm];
                \draw[green,thick,dashed] (2.265,1.47) -- (3.88,-0.19);
                \draw[green,thick,dashed] (4.21,1.56) -- (6.715,0.112);

                \draw[very thick,dashed] (0,0.7) -- (10,0.7);

                \path (4.0,0.0) node (a) [input2] {}
                      (4.5,0.0) node (b) [input2] {}
                      (5.25,0) node  {$\ldots$}
                      (6.0,0.0) node (c) [input2] {}
                      (6.5,0.0) node (d) [input2] {};

                \node (w) at (0.0,0.0) {\scriptsize $w'$};
                \draw[line2] (w) -- (a);

                \node[input2_virt] (t) at (5.25,-1.0) {};

                \draw[virt] (a) -- (t);
                \draw[virt] (d) -- (t);

                \path (4.25,-0.2) node (ab) [input2_virt] {}
                      (6.25,-0.2) node (cd) [input2_virt] {};

                \draw[virt] (b) -- (ab);
                \draw[virt] (c) -- (cd);
                \draw[virt] (ab) -- (cd);

                \draw[line1] (0.8,-0.2) -- (ab);
                \coordinate (start) at (0.8,-0.2);

                \draw[line1] (0.8,-1.0) -- (t);
                \node (w) at (0.0,-1.0) {\scriptsize $w'/\hat{n}$};
                \draw[line2] (w) -- (0.8,-1.0);

                \coordinate (end) at (0.8,-1.0);
                { [>=stealth']
                  \draw[<->] (start) -- (end);
                  \node[fill=white] at ($(start)!0.5!(end)$) {\scriptsize $\log \hat{n}$};
                  \node[blue] (desc) at ($(start)!0.5!(end) + (10.2,0)$) {\scriptsize \emph{virtual} comparisons};
                  \node[blue] (desc) at ($(w) + (11,0)$) {\scriptsize last \emph{virtual} comparison};
                  \draw[line2] (t) -- (desc);
                }


                { [yshift=-1.35cm] 
                  \path (4.0,0.0) node (a) [input2] {}
                        (4.5,0.0) node (b) [input2] {}
                        (5.25,0) node  {$\ldots$}
                        (6.0,0.0) node (c) [input2] {}
                        (6.5,0.0) node (d) [input2] {};

                  \draw[line1] (0.8,0.0) -- (a);
                  \node (w)at (0.0,0.0) {\scriptsize $w'/2\hat{n}^2$};
                  \draw[line2] (w) -- (0.8,0.0);

                  \coordinate (start) at (0.8,0.0);

                  \draw[spread] (t) -- (a);
                  \draw[spread] (t) -- (b);
                  \draw[spread] (t) -- (c);
                  \draw[spread] (t) -- (d);
                  \draw[spread] (t) -- (5.75,0.0);
                  \draw[spread] (t) -- (4.75,0.0);

                  \node [below=-0.1cm of a] {\tiny $\Downarrow$};
                  \node [below=-0.1cm of b] {\tiny $\Downarrow$};
                  \node [below=-0.1cm of c] {\tiny $\Downarrow$};
                  \node [below=-0.1cm of d] {\tiny $\Downarrow$};
                  { [yshift=-0.3cm]
                    \coordinate (a) at (4.0,0.0);
                    \coordinate (b) at (4.5,0.0);
                    \coordinate (c) at (6.0,0.0);
                    \coordinate (d) at (6.5,0.0);
                    \coordinate (la) at (3.85,-0.37);
                    \coordinate (ra) at (4.15,-0.37);
                    \coordinate (lb) at (4.35,-0.37);
                    \coordinate (rb) at (4.65,-0.37);
                    \coordinate (lc) at (5.85,-0.37);
                    \coordinate (rc) at (6.15,-0.37);
                    \coordinate (ld) at (6.35,-0.37);
                    \coordinate (rd) at (6.65,-0.37);

                    \draw[line1] (0.8,-0.37) -- (la);

                    \coordinate (end) at (0.8,-0.37);
                    { [>=stealth']
                      \draw[<->] (start) -- (end);
                      \node[fill=white] at ($(start)!0.5!(end)$) {\scriptsize $\log \sqrt{\hat{n}}$};
                      \node[text width=2.6cm,text centered] at ($(start)!0.5!(end) + (10.2,0)$)
                        {\scriptsize routing through \emph{marked} elements};
                    }

                    \draw[triangle] (a) -- (la) -- (ra) -- cycle;
                    \draw[triangle] (b) -- (lb) -- (rb) -- cycle;
                    \draw[triangle] (c) -- (lc) -- (rc) -- cycle;
                    \draw[triangle] (d) -- (ld) -- (rd) -- cycle;

                    \node [below=0.3cm of a] {\tiny $\Downarrow$};
                    \node [below=0.3cm of b] {\tiny $\Downarrow$};
                    \node [below=0.3cm of c] {\tiny $\Downarrow$};
                    \node [below=0.3cm of d] {\tiny $\Downarrow$};

                    { [yshift=-0.7cm] 
                      \path (4.0,0.0) node (a) [input2_virt] {}
                            (4.5,0.0) node (b) [input2_virt] {}
                            (6.0,0.0) node (c) [input2_virt] {}
                            (6.5,0.0) node (d) [input2_virt] {};

                      \draw[line1] (0.8,0.0) -- (a);
                      \node (w) at (0.0,0.0) {\scriptsize $w'/2\hat{n}^{2.5}$};
                      \draw[line2] (w) -- (0.8,0.0);

                      \coordinate (start) at (0.8,0.0);
                      \coordinate (end) at (0.8,-1.0);
                      { [>=stealth']
                        \draw[<->] (start) -- (end);
                        \node[fill=white] (w) at ($(start)!0.5!(end)$) {\scriptsize $\log \hat{n}$};
                        \node[blue] (desc) at ($(start)!0.5!(end) + (10.2,0)$) {\scriptsize \emph{virtual} comparisons};
                      }

                      \node[input2_virt] (t) at (5.25,-1.0) {};

                      \draw[line1] (0.8,-1.0) -- (t);
                      \node (w) at (0.0,-1.0) {\scriptsize $w'/2\hat{n}^{3.5}$};
                      \node[blue] (desc) at ($(w) + (11,0)$) {\scriptsize last \emph{virtual} comparison};
                      \draw[line2] (t) -- (desc);
                      \draw[line2] (w) -- (0.8,-1.0);

                      \draw[virt] (a) -- (t);
                      \draw[virt] (d) -- (t);

                      \path (4.25,-0.2) node (ab) [input2_virt] {}
                            (6.25,-0.2) node (cd) [input2_virt] {};

                      \draw[virt] (b) -- (ab);
                      \draw[virt] (c) -- (cd);
                      \draw[virt] (a) -- (d);
                      \node[fill=white] at (5.25,0) {$\ldots$};

                      \path (3.75, -1.4) node [input2] {}
                            (4.5, -1.4) node [input2] {}
                            (5.25, -1.4) node [input2] {}
                            (6.0, -1.4) node [input2] {}
                            (6.75, -1.4) node [input2] {};
                      \node (w) at (0.0,-1.4) {\scriptsize $w'' = w'/4\hat{n}^{4.5}$};
                      \draw[line2] (w) -- (3.75,-1.4);

                      \draw[spread] (t) -- (3.75,-1.4);
                      \draw[spread] (t) -- (4.0,-1.4);
                      \draw[spread] (t) -- (4.25,-1.4);
                      \draw[spread] (t) -- (4.5,-1.4);
                      \draw[spread] (t) -- (4.75,-1.4);
                      \draw[spread] (t) -- (5.0,-1.4);
                      \draw[spread] (t) -- (5.25,-1.4);
                      \draw[spread] (t) -- (5.5,-1.4);
                      \draw[spread] (t) -- (5.75,-1.4);
                      \draw[spread] (t) -- (6.0,-1.4);
                      \draw[spread] (t) -- (6.25,-1.4);
                      \draw[spread] (t) -- (6.5,-1.4);
                      \draw[spread] (t) -- (6.75,-1.4);

                      \node[red] (desc) at (11,-1.4) {\scriptsize split into \emph{boxes}};
                      \draw[line2] (6.75,-1.4) -- (desc);
                    }
                  }
                }
              }
            }
          }
        }
      }

    \end{tikzpicture}
  \end{center}
  \caption{Algorithm and assigned weights -- first 2 levels of recursion}
  \label{fig_1}
\end{figure*}
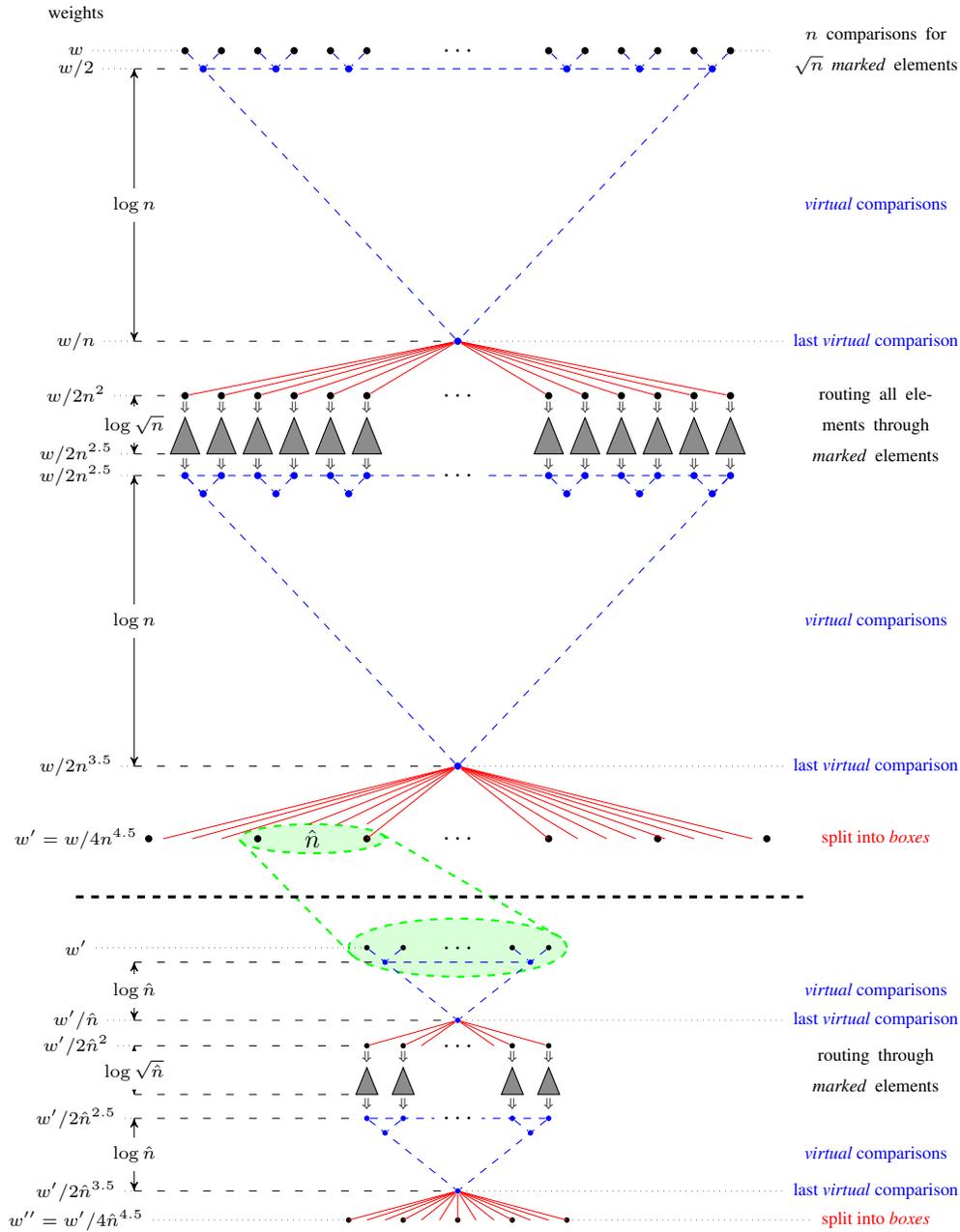 

Figure~\ref{fig_1} presents the network-like interpretation
of the algorithm, including \emph{virtual}
comparisons. It is meant to be helpful in understanding
the description of the algorithm.
A few things to keep in mind when reading it:
\begin{itemize}
  \item Nodes represent comparisons.
  \item If two comparisons are linked by an edge,
    the lower one depends on the upper one.
  \item \textbf{Black} nodes represent \emph{real} comparisons
    (ones that actually appear during algorithm execution).
  \item \textcolor{blue}{\textbf{Blue}} nodes represent
    \emph{virtual} comparisons.
  \item \textcolor{red}{\textbf{Red}} edges represent
    assigning weight to \emph{real} comparisons based on
    \emph{virtual} ones.
  \item Figure~\ref{fig_1} shows two consecutive levels of recursion
    of the algorithm; first level is shown in full;
    only one subproblem from the second level (of size $\hat{n}$) is shown.
  \item For simplicity, we assume (as we did in~\ref{algorithm_label})
    that there are exactly $n$ sorting comparisons, and exactly $n$ unmarked items
    are to be routed through the trees.
\end{itemize}

\ifx\tapmacrosareloaded\relax 
  \expandafter \else\let\tapmacrosareloaded\relax
  \immediate\write16{%
  This is TAP version 0.77 (BOP TeX macro package for setting tables)}%
\fi
\let\normalexclam!
\let\normalat@
\let\normalquotes"
\let\normalvbar|
\def\normalundscore{_}

\edef\dblquotecatcode{\the\catcode`\"}
\edef\brokenbarcatcode{\the\catcode`\|}
\edef\exclamcatcode{\the\catcode`\!}
\edef\undscorecatcode{\the\catcode`\_}
\edef\atcatcode{\the\catcode`\@}
\def\tableactive{%
  \catcode`\|13 \catcode`\@13 \catcode`\"13 \catcode`\!13\relax}

\ifx\tapspecial\unknown 
  \def\psunitsperinch{Resolution}
  \def\tapspecial#1{\special{ps:#1}}
\fi

\catcode`\_=11
\catcode`\@=11 

\def\psunits_setup{\tapspecial{%
  /in {\psunitsperinch\space mul} def
  /bp {72 div in} def
  /pt {72.27 div in} def
  /sp {65536 div pt} def
  /dd {1238 mul 1157 div pt} def
  /cc {12 mul dd} def
  /mm {25.4 div in} def
  /cm {10 mul mm} def
  /trth {\number\trth\space sp} def
  /TRTH {\number\TRTH\space sp} def
}}

\ifx\mscount\undefined_ \let\mscount\@multicnt \fi
\ifx\mscount\undefined_ \csname newcount\endcsname\mscount \fi

\tableactive

\def|{\ifmmode \vert\else \normalvbar \fi} 
\let!\normalexclam
\let@\normalat
\let"\normalquotes

\newdimen\trth \trth=.4pt
\newdimen\TRTH \TRTH=1.2pt
\newcount\align_state \align_state=0 

\def\hssf{\hskip 0pt plus 1fill minus 1fill}
\def\math#1{\relax $\relax#1\relax$}
\def\displaymath #1{\relax$\displaystyle #1\relax$}

\newtoks\everytable \everytable{\relax}
\newtoks\thistable \thistable{\relax}

\newdimen\desiredwidth
\def\width_accuracy_limit{.01pt}
\newdimen\widthaccuracy \widthaccuracy\width_accuracy_limit
\newif\iflongcalculation

\newbox\table_box

\newbox\box_tmp
\newdimen\innerht \newdimen\innerdp 
\newdimen\normht  \newdimen\normdp  
\newdimen\extraht \newdimen\extradp 

\def\defaultcellmarg{0.5em} 
\newdimen\cellmarg          

\newcount\cellnum
\newcount\maxcellnum
\newdimen\cellcorr
\newdimen\maxcellcorr
\newdimen\mincellcorr
\newcount\cellstate 

\newif\ifignorecellstate

\def\ostrut{\vrule width0mm}
\def\istrut{\ostrut height\innerht depth\innerdp\relax}

\def\defaultstrut{%
  \setbox\box_tmp\hbox{(pl}%
  \settablestrut \ht\box_tmp \dp\box_tmp \ht\box_tmp \dp\box_tmp
  \adjusttablestrut .45ex .15ex .95ex .55ex
  \setstrut \normht \normdp
}

\def\setstrut{\afterassignment\innerdp\innerht}

\def\adjuststrut{\afterassignment\adjuststrut_\advance\innerht}
\def\adjuststrut_{\advance\innerdp}

\def\settablestrut{\afterassignment\settablestrut_\normht}
\def\settablestrut_{\afterassignment\settablestrut__\normdp}
\def\settablestrut__{\afterassignment\settablestrut___\extraht}
\def\settablestrut___{\extradp}

\def\adjusttablestrut{\afterassignment\adjusttablestrut_\advance\normht}
\def\adjusttablestrut_{\afterassignment\adjusttablestrut__\advance\normdp}
\def\adjusttablestrut__{\afterassignment\adjusttablestrut___\advance\extraht}
\def\adjusttablestrut___{\advance\extradp}

\def\symcellcorr{\ignorecellstatetrue}
\def\asymcellcorr{\ignorecellstatefalse}
\symcellcorr


\newdimen\mincellmarg
\def\update_mincellmarg{%
  \dim_tmp\cellmarg \advance\dim_tmp\cellcorr
  \noexpand\ifdim\dim_tmp<\mincellmarg \global\mincellmarg\dim_tmp
  \noexpand\fi}

\def\leftcellside{%
  \global\advance\cellnum1\kern\cellmarg
  \ifignorecellstate \kern\cellcorr \update_mincellmarg
  \else {\multiply\cellcorr\cellstate \kern\cellcorr \update_mincellmarg
  }\fi
}
\def\antileftcellside{\hskip-\cellmarg
  \ifignorecellstate \kern-\cellcorr
  \else {\multiply\cellcorr\cellstate \kern-\cellcorr}\fi
}
\def\rightcellside{%
  \global\advance\cellnum1\kern\cellmarg
  \ifignorecellstate \kern\cellcorr \update_mincellmarg
  \else {\cellstate-\cellstate \advance\cellstate2
    \multiply\cellcorr\cellstate \kern\cellcorr \update_mincellmarg
  }\fi
}
\def\antirightcellside{\hskip-\cellmarg
  \ifignorecellstate \kern-\cellcorr
  \else {\cellstate-\cellstate \advance\cellstate2
    \multiply\cellcorr\cellstate \kern-\cellcorr}\fi
}

\def\beginflatcell{\vbox to0mm\bgroup\vss \hbox\bgroup \istrut}
\def\endflatcell{\egroup\vss\egroup}

\def\begindimencell{\vbox to\tablerowheight\bgroup\vss \hbox\bgroup}
\def\enddimencell{\egroup\vss\egroup}

\newif\ifv_squash
\newbox\cell_box
\def\normcellbox{\hbox\bgroup\v_squashfalse\let\v_box_or_top\vbox\cellbox_}
\def\flatcellbox{\hbox\bgroup\v_squashtrue\let\v_box_or_top\vbox\cellbox_}
\def\normcelltop{\hbox\bgroup\v_squashfalse\let\v_box_or_top\vtop\cellbox_}
\def\flatcelltop{\hbox\bgroup\v_squashtrue\let\v_box_or_top\vtop\cellbox_}
\def\cellbox_#1#{\def\cellbox_parm{#1}%
  \ifx\cellbox_parm\empty \expandafter \cellbox_a \else
  \expandafter \cellbox_p \fi}
\def\cellbox_a#1{
  \setbox\cell_box\v_box_or_top{%
    \let\\\cr
    \let@\normalat
    \let!\normalexclam
    \def\L##1{\ignorespaces ##1\unskip\hssf\null}
    \def\C##1{\hssf \ignorespaces ##1\unskip\hssf\null}
    \def\R##1{\hssf \ignorespaces ##1\unskip\null}
    \everycr{}
    \halign{%
      \istrut \ifnum\cellstate>0\hss\fi ##\ifnum\cellstate<2\hss\fi \cr
      \ostrut height\curr_distender_height
      \ignorespaces #1%
      \ifhmode\ostrut depth\curr_distender_depth\fi
      \crcr
    }
  }%
  \ifv_squash\vbox to0mm{\vss\box\cell_box\vss}\else\box\cell_box\fi
  \egroup
}
\def\cellbox_p#1{
  \setbox\cell_box\v_box_or_top{\parindent0mm \parfillskip0pt
    \def\\{\unskip\break\hskip0pt\relax\ignorespaces}
    \let@\normalat
    \let!\normalexclam
    \let~\ori_tilde 
    \let\-\ori_bsdash 
    \let\=\ori_bsequal 
    \def\L##1{\ignorespaces ##1\unskip\hfill\null}
    \def\C##1{\hfill\ignorespaces ##1\unskip\hfill\null}
    \def\R##1{\hfill\ignorespaces ##1\unskip\null}
    \hsize\cellbox_parm
    \normalbaselines \baselineskip\innerht \advance\baselineskip\innerdp
    \ifnum\cellstate>0 \leftskip0ptplus1fil \fi
    \ifnum\cellstate<2 \rightskip0ptplus1fil \fi
    \ostrut height\curr_distender_height
    \nobreak\hskip0pt\relax 
    \ignorespaces #1%
    \ifhmode\ostrut depth\curr_distender_depth\fi
  }%
  \ifv_squash \vbox to0mm{\vss\box\cell_box\vss}\else\box\cell_box\fi
  \egroup
}

\def\cellaction#1{\ifcase\align_state \def\cellaction_{#1}\or
  \def\cellaction_{#1}\or
  \def\cellaction_{\unskip&#1&}\else
  \def\cellaction_{#1}\fi\cellaction_}

\def\tablerulecmyk#1{\def\thetablerulecmyk{#1}}
\def\beginrulecmyk{\ifx\thetablerulecmyk\empty\else
  \tapspecial{gsave \thetablerulecmyk\space setcmykcolor}\fi}
\def\endrulecmyk{\ifx\thetablerulecmyk\empty\else\tapspecial{grestore}\fi}
\def\tablefullrule#1{\noalign{\beginrulecmyk \hrule height #1\endrulecmyk}}
\def\tablevrule#1{\hss\beginrulecmyk\vrule width #1\endrulecmyk\hss}
\def\tablehrule#1{\antileftcellside
  \beginrulecmyk\leaders\hrule height #1\hfill\endrulecmyk
  \antirightcellside}

\def\preambleleft{\cellstate0\relax \leftcellside \currtfont
  \begincell \ignorespaces ####\unskip\endcell \hfil \rightcellside}
\def\preamblecenter{\cellstate1\relax \leftcellside \hfil \currtfont
  \begincell \ignorespaces ####\unskip\endcell \hfil \rightcellside}
\def\preambleright{\cellstate2\relax \leftcellside \hfil \currtfont
  \begincell \ignorespaces ####\unskip\endcell \rightcellside}
\def\tableleft#1{\omit\cellstate0\relax \leftcellside \currtfont
  \begincell \ignorespaces #1\endcell \hfil \rightcellside}
\def\tablecenter#1{\omit\cellstate1\relax \leftcellside \hfil \currtfont
  \begincell \ignorespaces #1\endcell \hfil \rightcellside}
\def\tableright#1{\omit\cellstate2\relax \leftcellside \hfil \currtfont
  \begincell \ignorespaces #1\endcell \rightcellside}

\def\span_{\span\omit \advance\mscount-1\relax}
\def\usecells{\omit\relax\afterassignment\usecells_\mscount}
\def\usecells_{%
  \advance\mscount-1\multiply\mscount2\relax
  \loop\ifnum\mscount>1 \span_ \repeat
  \ifnum\mscount>0 \span \else \relax \fi}

\def\short_fix_cellcorr{%
    \cellcorr\desiredwidth \advance\cellcorr-\wd\table_box
    \def\corr_sign{}%
    \ifdim\cellcorr<0mm\def\corr_sign{-}\cellcorr-\cellcorr\fi
    \resize \cellcorr{1sp}\cellcorr{\the\maxcellnum sp}%
    \cellcorr\corr_sign\cellcorr
}

\def\long_fix_cellcorr{%
  \ifdim\widthaccuracy<\width_accuracy_limit\relax
    \widthaccuracy\width_accuracy_limit\relax 
  \fi
  \cellcorr0mm
  \dim_tmp\desiredwidth \advance\dim_tmp-\wd\table_box
  \ifdim\dim_tmp<0mm \mincellcorr.5\dim_tmp
  \else \maxcellcorr.5\dim_tmp \fi
  \loop
    \ifdim\dim_tmp<0mm \maxcellcorr\cellcorr \dim_tmp-\dim_tmp
    \else \mincellcorr\cellcorr \fi
    \ifdim\dim_tmp>\widthaccuracy
    \cellcorr\mincellcorr \advance\cellcorr\maxcellcorr
    \cellcorr.5\cellcorr
    \setbox\table_box\hbox{\nospecanchors\curr_table}%
    \dim_tmp\desiredwidth \advance\dim_tmp-\wd\table_box
  \repeat
}

\def\fix_cellcorr{%
  \iflongcalculation \long_fix_cellcorr \else \short_fix_cellcorr \fi}

\def\tap_warning{TAP warning (line \the\inputlineno)}
\def\margin_warning{%
  \ifdim\cellcorr<0mm
    \ifdim\mincellmarg<\TRTH
      \immediate\write16{%
        \tap_warning: effective cell margin = \the\mincellmarg.}%
    \fi
  \fi}

\def\width_info#1{%
  \let\ \space 
  \immediate\write16{%
    \tap_warning: Unsuccessful forcing of the desired width.}%
  \immediate\write16{%
    \ \ \ \ \ \ \ \ \ \ \ \ \ \ desired width - resulting width = \the#1}%
  \immediate\write16{%
    \ \ \ \ \ \ \ \ \ \ \ \ \ \ required accuracy = \the\widthaccuracy}}

\def\width_warning{%
  \ifdim\desiredwidth>0mm
    \begingroup
      \advance\desiredwidth-\wd\table_box 
      \ifdim\desiredwidth>\widthaccuracy \width_info\desiredwidth \fi
      \ifdim\desiredwidth<-\widthaccuracy \width_info\desiredwidth \fi
    \endgroup
  \fi}

\newif\iftapverbose
\def\tapverboseon{\tapverbosetrue}
\def\tapverboseoff{\tapverbosefalse}

\newif\ifspec_distender \spec_distenderfalse

\newdimen\tablerowheight

\def\table_setup{%
  \tableactive
  \let\ori_tilde~
  \let\ori_bsdash\-
  \let\ori_bsequal\=
  \let\ori_bsgreater\> 
  \let\ori_bsexclam\!  
  \let\ori_left\left
  \let\ori_right\right
  \let\Lslash\L 
  \let\lslash\l 
  \everymath\expandafter{\the\everymath
    \let\>\ori_bsgreater \let\!\ori_bsexclam
    \let\left\ori_left \let\right\ori_right
    \let@\normalat \let!\normalexclam
    \let"\normalquotes \let|\normalvbar
  }
  \everydisplay\expandafter{\the\everydisplay
    \let\>\ori_bsgreater \let\!\ori_bsexclam
    \let\left\ori_left \let\right\ori_right
    \let@\normalat \let!\normalexclam
    \let"\normalquotes \let|\normalvbar
  }
  \let\!\normalexclam
  \let\>\rlap
  \let\<\llap
  \let\H\hbox 
  \let\x\flatcellbox
  \let\X\normcellbox
  \let\y\flatcelltop
  \let\Y\normcelltop
  \def\P##1{{\phantom{##1}}}  
  \def\V##1{{\vphantom{##1}}} 
  \def\K{\noalign\bgroup\afterassignment\K_\dim_tmp}
  \def\K_{\kern\dim_tmp\egroup}
  \def\L{\endcell\hfill\begincell}
  \def\F##1{\noalign{\global\let\currtfont##1}} 
  \def\-{\ifcase\align_state \tablefullrule{\trth}\else
         \endcell\tablehrule{\trth}\begincell\fi}
  \def\={\ifcase\align_state \tablefullrule{\TRTH}\else
         \endcell\tablehrule{\TRTH}\begincell\fi}
  \def|{\cellaction{\tablevrule\trth}}
  \def!{\cellaction{\tablevrule\TRTH}}
  \def\br##1{%
    \global\cellnum0\relax
    \global\align_state1\relax \ignorespaces ##1\unskip
    \global\align_state2\relax &%
  }
  \def\B##1##2{%
    \def\distender_kind{##2}%
    \def\dimen_suffix{=}%
    \def\B__{\B_##1##2}%
    \ifx\distender_kind\dimen_suffix
      \afterassignment\B__ \global\tablerowheight =
    \else \global\tablerowheight=0pt\relax \expandafter \B__ \fi
  }%
  \def\B_##1##2{%
    \br{##1\spec_distendertrue\D{##2}}}
  \def\er##1{\global\align_state=3\unskip&##1\unskip \global\align_state=0
             \iftapverbose \message{*}\fi \cr}
  \def\E{\er}
  \def\D##1{
    \ifvmode\leavevmode\fi
    \ifhmode\nobreak\hskip0pt\relax\fi 
    \gdef\next_tok{}%
    \def\distender_kind{##1}%
    \def\null_suffix{0}%
    \def\flat_suffix{-}%
    \def\dimen_suffix{=}%
    \def\norm_suffix{+}%
    \edef\dn_suffix{\normalundscore}
    \def\up_suffix{^}%
    \def\updn_suffix{:}%
    \setbox\box_tmp\hbox{%
      \ostrut
      \ifx\distender_kind\null_suffix \else
      \ifx\distender_kind\norm_suffix height\normht depth\normdp \else
      \ifx\distender_kind\flat_suffix height0mm depth0mm
        \ifspec_distender
          \global\let\begincell\beginflatcell
          \global\let\endcell\endflatcell
        \fi
      \else
      \ifx\distender_kind\dimen_suffix height0mm depth0mm
        \ifspec_distender
          \global\let\begincell\begindimencell
          \global\let\endcell\enddimencell
        \else
          \immediate\write16{%
            \tap_warning: `=' following \string\D\space is equivalent to `0'}%
        \fi
      \else
      \ifx\distender_kind\up_suffix height\extraht depth\normdp \else
      \ifx\distender_kind\dn_suffix height\normht depth\extradp \else
      \ifx\distender_kind\updn_suffix height\extraht depth\extradp
      \else \gdef\next_tok{##1}
      \fi\fi\fi\fi\fi\fi\fi \relax
    }%
    \xdef\curr_distender_depth{\the\dp\box_tmp}%
    \xdef\curr_distender_height{\the\ht\box_tmp}%
    \unhbox\box_tmp
    \ifspec_distender
      \aftergroup\next_tok 
    \else \expandafter \next_tok \fi
  }
  \let@\usecells
  \def~{\leavevmode\phantom{0}}
  \defaultstrut
  \longcalculationtrue
  \cellmarg\defaultcellmarg 
  \everycr{\noalign{%
    \ifnum\cellnum>\maxcellnum \global\maxcellnum\cellnum\fi
    \global\let\begincell\defaultbegincell
    \global\let\endcell\defaultendcell
    \global\align_state0
  }}
  \let\defaultbegincell\relax
  \let\defaultendcell\relax
  \tablerulecmyk{}
  \tablebkgcmyk{0 0 0 0}
  \global\deferredtrue 
  \the\everytable\relax \the\thistable\relax}

\def\begintable_{\vbox\bgroup
  \global\let\begincell\defaultbegincell 
  \global\let\endcell\defaultendcell     
  \global\let\currtfont\relax            
  \global\maxcellnum0 \global\cellnum0 \global\mincellmarg\maxdimen
  \let\cellaction_\relax 
  \let\ori_par\par
  \let\par\empty 
  \everyvbox\expandafter{\the\everyvbox \let\par\ori_par}
}
\def\endtable_{\crcr\egroup\egroup}

\def\begintableformat #1\endtableformat{\offinterlineskip \tabskip = 0pt
  \let\left\preambleleft \let\center\preamblecenter \let\right\preambleright
  \def"{&########&}
  \edef\tbl_form{####&#1&####\cr}
  \let\left\tableleft \let\center\tablecenter \let\right\tableright
  \def"{\cellaction\relax}
  \xdef\currtfont{\the\font}
  \edef\leftcellside{\leftcellside}
  \edef\rightcellside{\rightcellside}
  \halign\bgroup\span\tbl_form}

\def\begintable{\vbox\bgroup \table_setup \midtable}
\long\def\midtable #1\endtable{%
  \def\curr_table{\begintable_#1\endtable_}\nospecanchors
  \edef\currhbadness{\the\hbadness}\edef\currvbadness{\the\vbadness}%
  \ifdim\desiredwidth>0mm \hbadness10000 \vbadness10000 \fi
  \setbox\table_box\hbox{\curr_table}%
  \ifdim\desiredwidth>0mm
    \ifdim\wd\table_box>0mm
      \fix_cellcorr
      \hbadness\currhbadness \vbadness\currvbadness
      \setbox\table_box\hbox{\curr_table}%
      \iflongcalculation \else \width_warning\fi
      \margin_warning
    \else
      \immediate\write16{%
        \tap_warning: table width = \the\wd\table_box\space
        (\desiredwidth ignored).}%
    \fi
  \fi
  \xdef\tablewidth{\the\wd\table_box}%
  \xdef\tablecellcorr{\the\cellcorr}%
  \box\table_box
  \egroup
  \thistable{\relax}}

\def\deftable#1\begintable{
  \begingroup
     \outer\def\begintable{}
     \def\table_name{#1}\tableactive
     \expandafter \expandafter \expandafter \gdef \expandafter \table_name
     \deftablecont}
\long\def\deftablecont#1\endtable{{#1}\endgroup}

\def\beginanchtable{\vbox\bgroup \psunits_setup \table_setup \midanchtable}
\long\def\midanchtable #1\endanchtable{%
  \def\curr_table{%
    \gdef\useanchors{}
    \begintable_#1\endtable_}%
  \edef\currhbadness{\the\hbadness}\edef\currvbadness{\the\vbadness}%
  \hbadness10000 \vbadness10000
  \setbox\table_box\hbox{\nospecanchors\curr_table}%
  \edef\totalcellnum{\the\maxcellnum}
  \hbox{%
    \ifdim\desiredwidth>0mm \fix_cellcorr
      \iflongcalculation \else 
        \setbox\table_box\hbox{\nospecanchors\curr_table}%
      \fi
    \fi
    \hbadness\currhbadness \vbadness\currvbadness
    \setbox\table_box\hbox{\curr_table}%
    \ifdim\desiredwidth>0mm
       \iflongcalculation \else \width_warning\fi
       \margin_warning
    \fi
    \xdef\tablewidth{\the\wd\table_box}%
    \xdef\tablecellcorr{\the\cellcorr}%
    \tbkg_patch\table_box
    \hbadness\currhbadness \vbadness\currvbadness
    \useanchors \nospecanchors \llap{\curr_table}%
  }%
  \egroup
  \thistable{\relax}}
\def\tbkg_patch#1{
 \speccastat
   {-\tablebkgcorr}{-\tablebkgcorr}
   {\castspecial{ur}\tbkgsuffix{\wd#1}{\ht#1}}%
 \speccastat
   {\tablebkgcorr}{\tablebkgcorr}
   {\castspecial{ll}\tbkgsuffix{0mm}{-\dp#1}}%
 \box#1
 \dotbkpatch 
}
\def\tbkgsuffix{TbKg.SuF}
\def\tablebkgcmyk#1{\def\thetablebkgcmyk{#1}}
\def\tablebkgcorr{0mm}
\def\dotbkpatch{\specrect\tbkgsuffix\thetablebkgcmyk}
\def\anchoffset{.5\trth}
\def\nospecanchors{%
  \def\expa_expa_##1##2##3##4{\gdef\useanchors{}}%
  \def\speccastat##1##2##3{}%
  \def\totalcellnum{0}%
}
\def\castat#1#2#3{
 \vbox to0mm{\vss\hbox to0mm{\kern#1\relax#3\hss}\kern#2}}
\def\speccastat{\castat} 
\def\castspecial#1#2#3#4{
  \speccastat{#3}{#4}
    {\tapspecial{currentpoint /y#1.#2 exch def /x#1.#2 exch def
    /xlast.#2 x#1.#2 def /ylast.#2 y#1.#2 def}}}
\def\expa_expa_#1#2#3#4{%
 \expandafter\expandafter\expandafter#1%
 \expandafter\expandafter\expandafter#2%
 \expandafter\expandafter\expandafter{%
 \expandafter#3#4}}
{\catcode`\p12 \catcode`\t12 \gdef\gobblePT#1pt{#1}}
\def\dimval#1{\expandafter\gobblePT\the#1} 
\def\gobblespace#1 #2\relax{#1}
\def\appendanchors#1{%
  \ifdeferred
  \edef\useanchors_{#1}\expa_expa_\gdef\useanchors\useanchors\useanchors_
  \else \global\deferredtrue #1\fi
}
\def\rectfill#1#2{\noalign{\appendanchors{\noexpand\specrect{#2}{#1}}}}
\def\trianfill#1#{\noalign\bgroup
   \lowercase{\edef\trian_dir{\gobblespace #1 \relax}}\trianfill_}
\def\trianfill_#1#2{\appendanchors{%
   \expandafter\noexpand\csname spectrian\trian_dir \endcsname {#2}{#1}}%
\egroup}
\newdimen\pen_wd
\def\diagstroke{\noalign\bgroup\afterassignment\diagstroke_\pen_wd}
\def\diagstroke_#1#{%
  \lowercase{\edef\diag_dir{\gobblespace #1 \relax}}\diagstroke__}
\def\diagstroke__#1#2{\appendanchors{%
  \expandafter\noexpand\csname specdiag\diag_dir \endcsname
    {#2}{#1}{\number\pen_wd}}%
  \egroup}
\def\specdiagboth#1#2#3{%
  \edef\cmyk_{#2}%
  \tapspecial{gsave newpath \ifx\cmyk_\empty\else #2 setcmykcolor \fi
    #3 sp setlinewidth 1 setlinecap
    xll.#1 yll.#1 moveto xur.#1 yur.#1 lineto stroke
    xll.#1 yur.#1 moveto xur.#1 yll.#1 lineto stroke grestore}}
\def\specdiagdown#1#2#3{%
  \edef\cmyk_{#2}%
  \tapspecial{gsave newpath \ifx\cmyk_\empty\else #2 setcmykcolor \fi
    #3 sp setlinewidth 1 setlinecap
    xll.#1 yur.#1 moveto xur.#1 yll.#1 lineto stroke grestore}}
\def\specdiagup#1#2#3{%
  \edef\cmyk_{#2}%
  \tapspecial{gsave newpath \ifx\cmyk_\empty\else #2 setcmykcolor \fi
    #3 sp setlinewidth 1 setlinecap
    xll.#1 yll.#1 moveto xur.#1 yur.#1 lineto stroke grestore}}
\def\specrect#1#2{%
  \edef\cmyk_{#2}%
  \tapspecial{gsave newpath \ifx\cmyk_\empty\else #2 setcmykcolor \fi
    xll.#1 yll.#1 moveto
    xur.#1 yll.#1 lineto xur.#1 yur.#1 lineto xll.#1 yur.#1 lineto
    closepath fill grestore}}
\def\spectrianne#1#2{%
  \edef\cmyk_{#2}%
  \tapspecial{gsave newpath \ifx\cmyk_\empty\else #2 setcmykcolor \fi
    xll.#1 yur.#1 moveto xur.#1 yll.#1 lineto xur.#1 yur.#1 lineto
    closepath fill grestore}}
\def\spectriannw#1#2{%
  \edef\cmyk_{#2}%
  \tapspecial{gsave newpath \ifx\cmyk_\empty\else #2 setcmykcolor \fi
    xll.#1 yll.#1 moveto xur.#1 yur.#1 lineto xll.#1 yur.#1 lineto
    closepath fill grestore}}
\def\spectrianse#1#2{%
  \edef\cmyk_{#2}%
  \tapspecial{gsave newpath \ifx\cmyk_\empty\else #2 setcmykcolor \fi
    xll.#1 yll.#1 moveto xur.#1 yll.#1 lineto xur.#1 yur.#1 lineto
    closepath fill grestore}}
\def\spectriansw#1#2{%
  \edef\cmyk_{#2}%
  \tapspecial{gsave newpath \ifx\cmyk_\empty\else #2 setcmykcolor \fi
    xll.#1 yll.#1 moveto xur.#1 yll.#1 lineto xll.#1 yur.#1 lineto
    closepath fill grestore}}

\newcount\col_num
\newtoks\col_tok

\newif\ifdeferred

\def\ontext{\noalign{\global\deferredfalse}}
\def\clearafters{\noalign{\gdef\aftersuffix{}\gdef\afteranchor{}}}

\def\uranchor{\clearafters\corneranchor{ur}} 
\def\llanchor{\clearafters\corneranchor{ll}} 
\def\genanchor#1#2#3#4{
  \noalign{\gdef\aftersuffix{#3}\gdef\afteranchor{#4}}%
  \corneranchor{#1#2}%
}

\def\corneranchor#1{
  \noalign\bgroup \def\next_{\corneranchor_#1}\afterassignment\next_
  \global\col_num}
\def\corneranchor_#1#2{
     \if#1n%
       \ifnum\col_num<1
         \global\col_num\if#2r\totalcellnum \global\divide\col_num2
       \else1\fi
     \fi\fi
     \global\advance\col_num-1\relax
     \ifnum\col_num<0 \global\let\corneranchor__\cornerfullanchor
     \else \global\let\corneranchor__\cornercellanchor \fi
  \egroup \corneranchor__ #1#2}

\def\cornerfullanchor#1#2#3{
  \br{\speccastat{\if#2r\wd\table_box\else0mm\fi}{0mm}
        {\castspecial {#1#2}{#3\aftersuffix}
          {\if#1n0mm\else\if#2r-\fi\anchoffset\fi}
          {\if#1n0mm\else\if#1l-\fi\anchoffset\fi}}}%
  \er{}\noalign{\xdef\lastsuffix{#3}\afteranchor}\clearafters}

\def\cornercellanchor#1#2#3{
  \br{}
  \col_tok{}%
  \loop \ifnum\col_num>0
    \global\advance\col_num-1 \col_tok\expandafter{\the\col_tok"}%
  \repeat
  \the\col_tok
  \if#2r\hskip0ptplus1filll\fi
  \speccastat
    {\if#2l-\fi \ifignorecellstate 1\else
     \ifcase\cellstate \if#2l0\else2\fi\or 1\or \if#2l2\else0\fi\fi\fi
     \cellcorr}{0mm}
       {\speccastat {\if#2l-\fi\cellmarg}{0mm}
         {\castspecial {#1#2}{#3\aftersuffix}
           {\if#1n0mm\else\if#2l-\fi\anchoffset\fi}
           {\if#1n0mm\else\if#1l-\fi\anchoffset\fi}}%
  }%
  \if#2l\hskip0ptplus1filll\null\fi
  \er{}\noalign{\xdef\lastsuffix{#3}\afteranchor}\clearafters}
\def\bpsinsertion#1#2#3{\genanchor #1#2{#3}{}}
\def\epsinsertion#1#2#3#4{%
  \genanchor#2#3{#4}{\appendanchors{\tapspecial{#1}}}%
}

\def\bgenstroke#1#2{\bpsinsertion #1#2{.bs}}
\def\egenstroke#1#2#3{%
  \epsinsertion {gsave newpath
     xlast.\lastsuffix.bs ylast.\lastsuffix.bs moveto
     xlast.\lastsuffix.es ylast.\lastsuffix.es lineto 
     trth setlinewidth 
     #1 
     stroke grestore
  }#2#3{.es}%
}

\def\bgenrectangle#1#2{\bpsinsertion #1#2{.br}}
\def\egenrectangle#1#2#3{%
  \epsinsertion{gsave
     xlast.\lastsuffix.br ylast.\lastsuffix.br
     xlast.\lastsuffix.er xlast.\lastsuffix.br sub
     ylast.\lastsuffix.er ylast.\lastsuffix.br sub
     #1 
     rectfill grestore
  }#2#3{.er}%
}

\def\bstroke{\bgenstroke}
\def\estroke{
  \noalign\bgroup \afterassignment\estroke_ \global\pen_wd}
\def\estroke_#1#2{\egroup \egenstroke{\number\pen_wd\space sp setlinewidth
   mark #1 counttomark 4 eq {setcmykcolor} if cleartomark
   [[#2] {pt} forall] 0 setdash}}

\def\brectangle{\bgenrectangle}
\def\erectangle#1{
  \egenrectangle{mark #1 counttomark 4 eq {setcmykcolor} if cleartomark}}

%
%
\def\resize
  #1
  #2
  #3
  #4
  {%
  \dim_r#2\relax \dim_x#3\relax \dim_t#4\relax
  \dim_tmp=\dim_r \divide\dim_tmp\dim_t
  \dim_y=\dim_x \multiply\dim_y\dim_tmp
  \multiply\dim_tmp\dim_t \advance\dim_r-\dim_tmp
  \dim_tmp=\dim_x
  \loop \advance\dim_r\dim_r \divide\dim_tmp 2
  \ifnum\dim_tmp>0
    \ifnum\dim_r<\dim_t\else
      \advance\dim_r-\dim_t \advance\dim_y\dim_tmp \fi
  \repeat
  #1\dim_y\relax}
\newdimen\dim_x    
\newdimen\dim_y    
\newdimen\dim_t    
\newdimen\dim_r    
\newdimen\dim_tmp 

\catcode`\"\dblquotecatcode
\catcode`\|\brokenbarcatcode
\catcode`\!\exclamcatcode
\catcode`\_=\undscorecatcode
\catcode`\@=\atcatcode

\subsection{Experimental Results} \label{sec_impl}

\begin{table*}[t] 
\small
\desiredwidth=\textwidth
\begintable
\begintableformat \right " &\right \endtableformat
\=
\B!^ ! @5\center{\texttt{boxsort}-based} ! @5\center{\texttt{quicksort}-based} \E!
\B!- \center{input} ! @5\= ! @5\= \E!
\B!^ ! @3\center{time [ms]} | \center{avg} | \center{avg} ! @3\center{time [ms]} | \center{avg} | \center{avg} \E!
\B!- \center{size ($n$)} ! @3\- | | ! @3\- | | \E!
\B!^ ! avg | min | max | \center{comp} | \center{resol} ! avg | min | max | \center{comp} | \center{resol} \E!
\-
\B!^ 10 ! 0.8 | 0 | 2 | 11.4 | 7.2 ! 0.6 | 0 | 2 | 11.7 | 6.7 \E!
\-
\B!^ 100 ! 40.5 | 33 | 48 | 43.6 | 14.3 ! 37.9 | 30 | 56 | 63.1 | 15.4 \E!
\-
\B!^ 500 ! 323.1 | 282 | 365 | 69.2 | 20.0 ! 322.8 | 271 | 410 | 129.2 | 20.5 \E!
\-
\B!^ 1000 ! 736.4 | 644 | 809 | 79.3 | 22.2 ! 753.2 | 610 | 919 | 163.9 | 23.5 \E!
\-
\B!^ 2000 ! 1674.1 | 1474 | 1852 | 89.4 | 24.6 ! 1720.6 | 1475 | 2047 | 206.9 | 26.4 \E!
\-
\B!^ 5000 ! 4788.9 | 4353 | 5386 | 102.7 | 27.5 ! 4986.3 | 4322 | 5838 | 268.0 | 28.4 \E!
\-
\B!^ 8000 ! 8194.6 | 7366 | 9218 | 111.1 | 30.2 ! 8648.8 | 7516 | 10252 | 303.7 | 31.1 \E!
\-
\B!^ 10000 ! 10480.5 | 9189 | 11843 | 113.3 | 30.4 ! 10986.5 | 9773 | 13210 | 323.8 | 29.7 \E!
\-
\B!^ 12000 ! 13081.8 | 11995 | 14688 | 116.7 | 31.8 ! 13855.2 | 12272 | 16810 | 334.1 | 32.4 \E!
\-
\B!^ 14000 ! 15532.0 | 14127 | 17382 | 118.7 | 32.4 ! 16233.5 | 14695 | 18813 | 346.4 | 32.7 \E!
\-
\B!^ 16000 ! 17926.0 | 15925 | 19975 | 120.4 | 31.8 ! 18914.1 | 16779 | 21375 | 360.8 | 32.5 \E!
\-
\B!^ 18000 ! 20618.1 | 18175 | 23183 | 121.3 | 31.8 ! 21764.0 | 19081 | 24505 | 369.1 | 33.1 \E!
\-
\B!^ 20000 ! 23413.6 | 21516 | 26015 | 124.2 | 32.6 ! 24406.6 | 20939 | 28470 | 377.4 | 34.2 \E!
\=
\endtable
\caption{Test results for the median-of-lines problem;
avg comp -- average number of median comparison resolutions;
avg resol -- average number of median comparisons actually resolved using
$\mathcal{C}$}
\label{table_mlines}
\end{table*} 

\begin{table*}[t] 
\small
\desiredwidth=\textwidth
\begintable
\begintableformat \right " &\right \endtableformat
\=
\B!^ ! @5\center{\texttt{boxsort}-based} ! @5\center{\texttt{quicksort}-based} \E!
\B!- \center{input} ! @5\= ! @5\= \E!
\B!^ ! @3\center{time [s]} | \center{avg} | \center{avg} ! @3\center{time [s]} | \center{avg} | \center{avg} \E!
\B!- \center{size ($n$)} ! @3\- | | ! @3\- | | \E!
\B!^ ! avg | min | max | \center{comp} | \center{resol} ! avg | min | max | \center{comp} | \center{resol} \E!
\-
\B!^ 10 ! 0.003 | 0.001 | 0.004 | 33.3 | 11.0 ! 0.003 | 0.001 | 0.005 | 49.0 | 10.4 \E!
\-
\B!^ 100 ! 0.027 | 0.012 | 0.037 | 69.9 | 17.1 ! 0.023 | 0.010 | 0.037 | 141.5 | 17.2 \E!
\-
\B!^ 500 ! 0.283 | 0.133 | 0.428 | 97.4 | 23.1 ! 0.258 | 0.116 | 0.466 | 238.8 | 21.7 \E!
\-
\B!^ 1000 ! 1.167 | 0.515 | 2.352 | 106.4 | 22.8 ! 1.104 | 0.511 | 1.906 | 281.4 | 23.0 \E!
\-
\B!^ 1500 ! 2.355 | 0.712 | 4.466 | 113.3 | 23.5 ! 2.122 | 0.813 | 4.914 | 325.8 | 19.1 \E!
\-
\B!^ 2000 ! 4.628 | 2.002 | 9.222 | 116.5 | 20.2 ! 4.335 | 1.475 | 8.427 | 344.0 | 22.8 \E!
\-
\B!^ 3000 ! 11.619 | 3.151 | 22.068 | 121.6 | 21.2 ! 9.466 | 2.609 | 17.662 | 365.4 | 18.8 \E!
\-
\B!^ 4000 ! 18.461 | 7.910 | 36.385 | 124.9 | 19.6 ! 17.301 | 8.274 | 30.196 | 409.6 | 21.7 \E!
\-
\B!^ 5000 ! 30.618 | 9.844 | 56.330 | 124.7 | 18.2 ! 25.990 | 9.430 | 49.221 | 425.9 | 20.8 \E!
\-
\B!^ 6000 ! 42.180 | 14.767 | 73.373 | 128.0 | 21.6 ! 35.607 | 18.670 | 63.048 | 433.6 | 17.5 \E!
\-
\B!^ 7000 ! 58.630 | 15.824 | 137.950 | 132.3 | 19.0 ! 48.270 | 12.282 | 113.416 | 447.9 | 19.9 \E!
\-
\B!^ 8000 ! 77.023 | 23.396 | 128.263 | 133.6 | 18.6 ! 66.104 | 25.622 | 123.445 | 459.2 | 21.6 \E!
\-
\B!^ 9000 ! 95.203 | 44.364 | 174.155 | 132.6 | 18.7 ! 73.835 | 28.754 | 131.519 | 464.8 | 17.7 \E!
\-
\B!^ 10000 ! 105.006 | 37.494 | 215.619 | 134.6 | 19.8 ! 96.872 | 38.702 | 187.137 | 474.9 | 20.4 \E!
\-
\B!^ 11000 ! 136.610 | 30.205 | 254.993 | 136.3 | 18.1 ! 122.691 | 47.178 | 241.167 | 488.9 | 17.5 \E!
\-
\B!^ 12000 ! 168.988 | 63.704 | 335.881 | 138.3 | 19.8 ! 139.351 | 68.135 | 239.619 | 496.0 | 17.8 \E!
\-
\B!^ 13000 ! 192.149 | 59.290 | 399.326 | 135.5 | 21.2 ! 162.257 | 49.541 | 376.480 | 505.2 | 19.3 \E!
\-
\B!^ 14000 ! 209.349 | 70.579 | 461.083 | 138.8 | 18.4 ! 178.618 | 63.412 | 372.113 | 524.4 | 17.8 \E!
\-
\B!^ 15000 ! 243.451 | 73.700 | 405.206 | 139.7 | 16.5 ! 208.025 | 66.146 | 409.933 | 518.0 | 18.2 \E!
\=
\endtable
\caption{Test results for the point labeling problem;
avg comp -- average number of median comparison resolutions;
avg resol -- average number of median comparisons actually resolved using
$\mathcal{C}$}
\label{table_labeling}
\end{table*} 

\begin{table*}[t] 
\small
\desiredwidth=\textwidth
\begintable
\begintableformat \right " &\right \endtableformat
\=
\B!^ ! @5\center{\texttt{boxsort}-based} ! @5\center{\texttt{quicksort}-based} \E!
\B!- \center{input} ! @5\= ! @5\= \E!
\B!^ ! @3\center{time [s]} | \center{avg} | \center{avg} ! @3\center{time [s]} | \center{avg} | \center{avg} \E!
\B!- \center{size ($n$)} ! @3\- | | ! @3\- | | \E!
\B!^ ! avg | min | max | \center{comp} | \center{resol} ! avg | min | max | \center{comp} | \center{resol} \E!
\-
\B!^ 4 ! 0.038 | 0.014 | 0.070 | 77.1 | 18.8 ! 0.032 | 0.012 | 0.057 | 125.8 | 16.7 \E!
\-
\B!^ 8 ! 0.465 | 0.169 | 0.838 | 122.5 | 26.4 ! 0.393 | 0.152 | 0.686 | 273.1 | 22.8 \E!
\-
\B!^ 12 ! 1.853 | 0.875 | 3.049 | 147.8 | 29.7 ! 1.692 | 0.771 | 2.777 | 367.5 | 27.1 \E!
\-
\B!^ 16 ! 4.861 | 1.944 | 7.269 | 163.3 | 33.6 ! 4.473 | 1.702 | 6.866 | 433.8 | 29.5 \E!
\-
\B!^ 20 ! 9.552 | 5.193 | 13.527 | 173.9 | 35.3 ! 9.064 | 4.469 | 11.916 | 438.7 | 32.0 \E!
\-
\B!^ 24 ! 16.145 | 8.633 | 21.313 | 181.9 | 37.4 ! 15.415 | 7.717 | 20.654 | 541.3 | 33.5 \E!
\-
\B!^ 28 ! 23.892 | 10.435 | 32.191 | 194.0 | 38.5 ! 23.817 | 10.606 | 31.826 | 579.1 | 36.2 \E!
\-
\B!^ 32 ! 35.894 | 22.690 | 48.253 | 205.7 | 42.0 ! 35.064 | 15.580 | 46.285 | 620.1 | 38.1 \E!
\-
\B!^ 36 ! 46.377 | 29.302 | 56.620 | 211.6 | 38.7 ! 47.254 | 27.589 | 63.524 | 658.3 | 37.2 \E!
\-
\B!^ 40 ! 62.825 | 39.393 | 78.046 | 217.1 | 42.3 ! 63.635 | 37.847 | 77.617 | 682.3 | 39.3 \E!
\-
\B!^ 44 ! 79.575 | 54.912 | 100.601 | 214.9 | 41.9 ! 80.520 | 54.202 | 101.100 | 700.9 | 40.3 \E!
\-
\B!^ 48 ! 99.633 | 67.896 | 123.148 | 220.8 | 43.1 ! 100.102 | 62.814 | 124.495 | 726.1 | 39.4 \E!
\-
\B!^ 52 ! 122.606 | 77.710 | 156.918 | 215.4 | 44.2 ! 122.633 | 78.534 | 150.289 | 738.7 | 39.1 \E!
\-
\B!^ 56 ! 148.018 | 98.792 | 191.290 | 225.7 | 43.9 ! 149.623 | 94.775 | 186.019 | 766.2 | 40.1 \E!
\=
\endtable
\caption{Test results for the graph matching problem;
avg comp -- average number of median comparison resolutions;
avg resol -- average number of median comparisons actually resolved using
$\mathcal{C}$}
\label{table_g2g}
\end{table*} 

In order to measure the performance of our method
we implemented two general parametric search frameworks:
our \verb\boxsort\-based method and the
\verb\quicksort\-based algorithm
of~van~Oostrum and Veltkamp~\cite{DBLP:journals/comgeo/OostrumV04}.
We then implemented three known algorithms that utilize
parametric search,
and compared the running times of both approaches
when used in these algorithms (answers were obviously the same).
We start with the description of the problems and implemented algorithms.

\subsubsection{Implemented Algorithms}

\begin{description}
  \item{\bf Point labeling}.
    The input to the \emph{point labeling}
    problem is a set of points in the plane (\emph{objects}),
    and sets of rectangles (\emph{labels}), one set per point
    (their elements are called \emph{candidate labels}).
    A \emph{feasible} solution is one where each \emph{object} is
    assigned a \emph{label} (from its \emph{candidate labels}),
    the \emph{labels} are drawn ``near'' their \emph{objects}, and
    they do not overlap.
    The goal is to find a label placement and the largest
    \emph{scaling factor} $\sigma > 0$,
    so that the solution is still \emph{feasible} when
    dimensions of \emph{labels} are multiplied by $\sigma$.
    The problem is motivated by applications in Geographic
    Information Systems, where the goal is to label objects
    with largest non-overlapping labels (to improve readability).

    Koike~{\it et al.}~\cite{knntw-lprvs-02} gave an algorithm
    for the special case when ``near'' is defined as
    ``each \emph{label} has a special \emph{pinning point},
    and \emph{label} has to be placed so that its \emph{pinning point}
    is exactly over the \emph{object}''. They show
    that if the \emph{candidate labels} meet an additional requirement
    on the relative placement of their \emph{pinning points},
    it can be decided if \emph{candidate labels} scaled by a
    given $\sigma$ have a \emph{feasible} placement,
    in time $O(n\log n)$, by a variant of the plane sweep method,
    where $n$ is the number of \emph{objects}.

    They use this algorithm as the decision algorithm, $\mathcal{C}$,
    for parametric search. Algorithm $\mathcal{A}$ from the
    parametric search setting is defined as sorting the (symbolic) coordinates
    of labels for $\sigma^*$ -- the optimal value of the \emph{scaling factor}.
    Therefore, their parametric search algorithm for finding
    $\sigma^*$ works in $O(n\log^2n)$ time.

    An example is shown in Figure~\ref{point_labeling_example}.

  \item{\bf Matching planar drawings of graphs}.
    This problem was defined by Alt~{\it et al.}~\cite{Alt2003262},
    and is a generalization of the problem of computing the
    Fr{\'e}chet distance between two polygonal curves to computing
    a Fr{\'e}chet-like distance between two graphs, $G$ and $H$,
    embedded on a plane.

    A common explanation of the Fr{\'e}chet distance between two curves, $g$ and $h$, is
    the following: ``A man is walking his dog.
    What is the smallest length of the leash that allows the dog to traverse
    entire $h$, while simultaneously the man traverses entire $g$?''
    In the man-dog setting, the generalized problem for graphs $G$ and $H$
    is stated the following way: ``What is the smallest length of the leash that
    allows the dog to traverse the entire graph $H$, while simultaneously
    the man traverses \emph{some} part of $G$?''
    This intuition translates into a parametric transformation that can,
    for example,
    be used to produce a minimum-deformation morphing of one straight-line graph
    drawing into another.

    Alt {\it et al.} gave an $O(pq\log(pq))$ algorithm for deciding if
    a given $\lambda$, leash length, is sufficient, where $p$, $q$ are
    respective numbers of edges in $G$ and $H$.
    They construct a graph that relates possible movements of the man along
    edges of $G$ and the movements of the dog along edges of $H$
    with a leash
    of length $\lambda$. Then they perform a search in this graph to find a
    path that describes contiguous movement of man and dog that covers
    the entire $H$. It is used as the algorithm $\mathcal{C}$ from the
    generic parametric search setting.
    $\mathcal{A}$ is again defined as sorting items that describe
    the features of the aforementioned graph for optimal length
    of the leash, $\lambda^*$ (see~\cite{Alt2003262} for details).
    It yields an $O(pq\log^2(pq))$-time parametric search
    algorithm for finding $\lambda^*$.

    An example is shown in Figure~\ref{H_G_graphs}.

  \item{\bf Median of lines}.
    This ``toy problem'' was used by Megiddo
    to introduce and explain the parametric search technique
    in the original paper on the subject~\cite{DBLP:journals/jacm/Megiddo83}.
    The input consists of $n$ lines on a plane
    and the goal is to find a line whose intersection with the $x$-axis
    has the same number of lines directly above and below it.
    Megiddo gave a simple
    (although non-optimal) algorithm that utilizes
    parametric search and works in time $O(n\log n)$.

    An example is shown in Figure~\ref{median_of_lines_example}.
\end{description}

\subsubsection{Testing}

Algorithms for the above problems were implemented in \verb\C++\,
and tested on random inputs on two PC's running on Linux:

\begin{itemize}
  \item median of lines:  800 MHz CPU, 4~GB memory
  \item point labeling and matching planar
    drawings of graphs: 2.50 GHz CPU, 4~GB memory.
\end{itemize}

For the point labeling problem, inputs were
$n$ \emph{objects} with integer coordinates randomly chosen from
0-$10^6$, with 6 \emph{candidate labels} each,
pinned at their lower left corners.
For the graph matching problem, inputs were two graphs
with $n$ vertices and $O(n)$ edges each. Vertices were placed
at points with rational coordinates, with numerator
and denominator drawn independently from 1-$10^4$
(the complexity of the algorithm being $O(n^2\log^2{n})$ in this case).
For the median of lines problem, inputs were $n$ lines
defined by the $y=ax+b$ equation, with randomly
chosen rational $a$ and $b$ values.
We ran the tests on numerous values of $n$, doing 100 tests
for each value of $n$. The results are presented in
Tables~\ref{table_mlines}, \ref{table_labeling} and~\ref{table_g2g}.

Recall that comparisons are of the form $\lambda < \lambda^*$.
If we have previously determined
that $\lambda_1 < \lambda^*$ ($\lambda_1 > \lambda^*$), and
comparison $\lambda_2 < \lambda^*$ is the new median comparison we
want resolved, there is no need to invoke $\mathcal{C}$ when
$\lambda_2 < \lambda_1$ ($\lambda_2 > \lambda_1$),
as the answer is obvious.
Tables~\ref{table_mlines}, \ref{table_labeling} and~\ref{table_g2g},
show both the number
of times the algorithms needed to call $\mathcal{C}$, and the
number of total median comparisons the algorithms wanted resolved
(the difference being the number of immediately resolved median comparisons).

\begin{figure*}[ht] 
  \begin{center}
    \begin{tikzpicture}
      [
        black_node/.style={fill,circle,inner sep=1.5pt},
      ]

      \begin{scope}[scale=.6]
        \node[black_node] (a) at (0, 0) {};
        \node[black_node] (b) at (4.5, 0) {};
        \node[black_node] (c) at (4.5, 1.5) {};
        \node[black_node] (d) at (1.5, 1.5) {};
        \node[black_node] (e) at (2.5, 4) {};
        \node[black_node] (f) at (0, 5) {};
        \node[black_node] (g) at (1, 5.5) {};
        \node[black_node] (h) at (5.5, 5) {};

        \node[below=0.1 of a] {(0, 0)};
        \node[below=0.1 of b] {(9, 0)};
        \node[below=0.1 of c] {(9, 3)};
        \node[below=0.1 of d] {(3, 3)};
        \node[below=0.1 of e] {(5, 8)};
        \node[below=0.1 of f] {(0, 10)};
        \node[below=0.1 of g] {(2, 11)};
        \node[below=0.1 of h] {(11, 10)};

        \node at (8, -2) {a)};

        \begin{scope}[yshift=-380]
          \node[black_node] (a) at (0, 0) {};
          \node[black_node] (b) at (4.5, 0) {};
          \node[black_node] (c) at (4.5, 1.5) {};
          \node[black_node] (d) at (1.5, 1.5) {};
          \node[black_node] (e) at (2.5, 4) {};
          \node[black_node] (f) at (0, 5) {};
          \node[black_node] (g) at (1, 5.5) {};
          \node[black_node] (h) at (5.5, 5) {};

          \draw (a) rectangle (4.46,1.44);
          \draw (d) rectangle (4.46, 3.71);
          \draw (g) rectangle (5.44, 7);

          \draw (f) -- (0.75, 5);
          \draw (f) -- (0, 8.75);
          \draw (0.75, 5) -- (0.75, 8.75);
          \draw[decorate, decoration=zigzag] (0, 8.75) -- (0.75, 8.75);
          \draw[decorate, decoration=zigzag] (0, 9) -- (0.75, 9);
          \draw (0, 9) -- (0, 10);
          \draw (0.75, 9) -- (0.75, 10);
          \draw (0, 10) -- (0.75, 10);

          \draw (b) -- (4.5, 0.75);
          \draw (b) -- (8.75, 0);
          \draw (4.5, 0.75) -- (8.75, 0.75);
          \draw (9, 0) -- (12, 0);
          \draw (9, 0.75) -- (12, 0.75);
          \draw (12, 0) -- (12, 0.75);
          \draw[decorate, decoration=zigzag] (8.75, 0) -- (8.75, 0.75);
          \draw[decorate, decoration=zigzag] (9, 0) -- (9, 0.75);

          \draw (c) -- (4.5, 2.25);
          \draw (c) -- (8.75, 1.5);
          \draw (4.5, 2.25) -- (8.75, 2.25);
          \draw (9, 1.5) -- (12, 1.5);
          \draw (9, 2.25) -- (12, 2.25);
          \draw (12, 1.5) -- (12, 2.25);
          \draw[decorate, decoration=zigzag] (8.75, 1.5) -- (8.75, 2.25);
          \draw[decorate, decoration=zigzag] (9, 1.5) -- (9, 2.25);

          \draw (e) -- (2.5, 4.75);
          \draw (e) -- (8.75, 4);
          \draw (2.5, 4.75) -- (8.75, 4.75);
          \draw (9, 4) -- (10, 4);
          \draw (9, 4.75) -- (10, 4.75);
          \draw (10, 4) -- (10, 4.75);
          \draw[decorate, decoration=zigzag] (8.75, 4) -- (8.75, 4.75);
          \draw[decorate, decoration=zigzag] (9, 4) -- (9, 4.75);

          \draw (h) -- (5.5, 5.75);
          \draw (h) -- (8.75, 5);
          \draw (5.5, 5.75) -- (8.75, 5.75);
          \draw (9, 5) -- (13, 5);
          \draw (9, 5.75) -- (13, 5.75);
          \draw (13, 5) -- (13, 5.75);
          \draw[decorate, decoration=zigzag] (8.75, 5) -- (8.75, 5.75);
          \draw[decorate, decoration=zigzag] (9,5) -- (9, 5.75);

          \node at (2.25, 0.75) {$3\times9$};
          \node at (6, 0.35) {$1.5\times18$};
          \node at (6, 1.85) {$1.5\times18$};
          \node at (3, 2.625) {$6\times4.5$};
          \node at (4, 4.35) {$1.5\times18$};
          \node at (6.75, 5.35) {$1.5\times18$};
          \node at (3.25, 6.25) {$3\times9$};
          \node[rotate=-270] at (0.35, 6.25) {$18\times1.5$};

          \node at (8, -1) {b)};
        \end{scope}
      \end{scope}
    \end{tikzpicture}
  \end{center}
  \caption{Illustration of the performance of the point labeling algorithm.\newline
          a) \emph{objects} for the point labeling problem;
           \emph{candidate labels} for each point are rectangles
                 of proportions (width$\times$height): $1\times12$, $2\times6$,
                 $3\times4$, $4\times3$, $6\times2$ and $12\times1$.
                  \newline
                 b) \emph{feasible} placement of labels with
                 scaling factor $\sigma = 1.5$.}
  \label{point_labeling_example}
\end{figure*} 

\begin{figure*}[p] 
  \makebox[\textwidth][c]{
    \begin{tikzpicture}
      [
        blue_node/.style={fill=blue,circle,inner sep=1.5pt},
        black_node/.style={fill,circle,inner sep=1.5pt},
        empty/.style={fill=black!35},
        red_e/.style={red,very thick}
      ]

      \begin{scope}[scale=0.5]
        \node[black_node] (G_0) at (0, 2) {};
        \node[black_node] (G_1) at (2, 9) {};
        \node[black_node] (G_2) at (9, 9) {};
        \node[black_node] (G_3) at (4, -2) {};

        \draw (G_0) -- (G_1) -- (G_2) -- (G_3) -- (G_0);
        \draw[red, very thick] ($(G_0)!.55!(G_1)$) -- (G_1);
        \draw[red, very thick] (G_1) -- (G_2);
        \draw[red, very thick] (G_2) -- ($(G_2)!.74!(G_3)$);

        \node[left=0.1 of G_0] (G_0_name) {$w_0$};
        \node[below=0.1 of G_0_name] {(0, 2)};

        \node[above=0.1 of G_1] {(2, 9)};
        \node[left=0.1 of G_1] {$w_1$};

        \node[above=0.1 of G_2] {(9, 9)};
        \node[right=0.1 of G_2] {$w_2$};

        \node[right=0.1 of G_3] (G_3_name) {$w_3$};
        \node[below=0.1 of G_3_name] {(4, $-2$)};
        \node at (1.2,5) {$x$};
        \node at (5.5, 8.6) {$y$};
        \node at (6,1.5) {$z$};
        \node at (1.5,0) {$t$};

        \node[blue_node] (H_0) at (1, 7) {};
        \node[blue_node] (H_1) at (5, 10) {};
        \node[blue_node] (H_2) at (9, 8) {};
        \node[blue_node] (H_3) at (5, 2) {};

        \node[blue, left=0.1 of H_0] (H_0_name) {$v_0$};
        \node[blue, below=0.1 of H_0_name] {(1, 7)};

        \node[blue, below=0.1 of H_1]  {$v_1$};
        \node[blue, above=0.1 of H_1] {(5, 10)};

        \node[blue,right=0.1 of H_2] (H_2_name) {$v_2$};
        \node[blue,below=0.1 of H_2_name] {(9, 8)};

        \node[blue, left=0.1 of H_3] (H_3_name) {$v_3$};
        \node[blue, below=0.1 of H_3_name] {(5, 2)};

        \node[blue] at (3,8) {$a$};
        \node[blue] at (7,9.5) {$b$};
        \node[blue] at (5.9,4) {$c$};

        \draw[blue] (H_0) -- (H_1) -- (H_2) -- (H_3);
      \end{scope}

      \begin{scope}[xshift=180,yshift=140,scale=.6]
        \begin{scope}
          \clip (-0.1,0) rectangle (2.1,2);
          \draw[empty] (0,0) rectangle (2,2);
          \fill[white] (0,1.263) circle[x radius=0.25,y radius=1.11,rotate=-50];
          \node at (1,1) {\large $C_{a,x}$};
        \end{scope}

        \begin{scope}
          \clip (3.99,0) rectangle (6.1,2);
          \draw[empty] (4,0) rectangle (6,2);
          \fill[white] (5.65,0.4045) circle[x radius=0.2,y radius=0.5,rotate=-40];
          \node at (5,1) {\large $C_{a,y}$};
        \end{scope}

        \draw[empty] (8,0) rectangle (10,2);
        \node at (9,1) {\large $C_{a,z}$};

        \draw[empty] (0,-4) rectangle (2,-2);
        \node at (1,-3) {\large $C_{a,t}$};

        \draw[empty] (4,-4) rectangle (6,-2);
        \node at (5,-3) {\large $C_{b,x}$};
        \begin{scope}
          \clip (7.99,-4) rectangle (10.1,-2);
          \draw[empty] (8,-4) rectangle (10,-2);
          \fill[white] (9.55, -2.073) circle[x radius=0.2,y radius=1.9,rotate=-55];
          \node at (9,-3) {\large $C_{b,y}$};
        \end{scope}

        \begin{scope}
          \clip (-0.1,-8) rectangle (2.1,-6);
          \draw[empty] (0,-8) rectangle (2,-6);
          \fill[white] (2,-8.1) circle[x radius=0.6,y radius=1, rotate=-70];
          \node at (1,-7) {\large $C_{b,z}$};
        \end{scope}

        \draw[empty] (4,-8) rectangle (6,-6);
        \node at (5,-7) {\large $C_{b,t}$};

        \draw[empty] (8,-8) rectangle (10,-6);
        \node at (9,-7) {\large $C_{c,x}$};

        \draw[empty] (0,-12) rectangle (2,-10);
        \fill[white] (0,-10) circle (0.01);
        \node at (1,-11) {\large $C_{c,y}$};

        \begin{scope}
          \clip (3.99,-12) rectangle (6.1,-10);
          \draw[empty] (4,-12) rectangle (6,-10);
          \fill[white] (5,-11.175) circle[x radius=0.3,y radius=1.8, rotate=-60];
          \node at (5,-10.5) {\large $C_{c,z}$};
        \end{scope}
        \draw[empty] (8,-12) rectangle (10,-10);
        \node at (9,-11) {\large $C_{c,t}$};

        \draw[red_e] (0.82,2) .. controls (0.82,2.6) and (1.2,2.6) .. (1.6,2.6);
        \draw[red_e] (1.6,2.6) .. controls (2.8, 2.6) and (3.0, -0.5) .. (4.0, -0.5);
        \draw[red_e] (4.0, -0.5) .. controls (5.2, -0.5) and (5.4, -0.5) .. (5.4, 0);

        \draw[red_e] (6, 0.66) .. controls (6.5, 0.66) and (7.6, -3.17) .. (8,-3.163);

        \draw (10,-2) .. controls (11,-2) and (11,-5) .. (10,-5);
        \draw (10,-5) .. controls (9,-5) and (1,-5) .. (0,-5.5);
        \draw (0,-5.5) .. controls (-1,-6) and (-1, -10) .. (0, -10);

        \draw[red_e] (9.3,-2) .. controls (9.3,-1) and (11.5,-1) .. (11.5,-5);
        \draw[red_e] (11.5,-5) .. controls (11.5, -9) and (10, -9) .. (9,-9);
        \draw[red_e] (9,-9) .. controls (7,-9) and (1.12, -9) .. (1.12,-8);

        \draw (0,-10) .. controls (0.0,-9.5) .. (1.5,-9.5);
        \draw (1.5,-9.5) .. controls (2.5,-9.5) .. (2.5, -12);
        \draw (2.5,-12) .. controls (2.5,-12.5)  .. (3.5, -12.5);
        \draw (3.5,-12.5) .. controls (3.5,-12.5) and (4, -12.5)  .. (4, -12);

        \draw[red_e] (2,-7.47) .. controls (3.5,-7.47) and (2.5,-11.47) .. (4,-11.47);

        \draw[red_e] (-0.01,0.945) -- (0.82,2);
        \draw[red_e] (5.4,0) -- (6,0.66);
        \draw[red_e] (8,-3.163) -- (10,-2) -- (9.3,-2);
        \draw[red_e] (1.12,-8) -- (2,-7.47);
        \draw[red_e] (4,-11.47) -- (6.01,-10.88);
      \end{scope}
      \node at (2,-3) {a)};
      \node at (9.5,-3) {b)};
    \end{tikzpicture}
  }
  \caption{Illustration of the graph matching algorithm.
           \newline
           a) Input to the problem.
           Graph $H$ is drawn in \textcolor{blue}{blue},
           while graph $G$ in drawn in black.
           A ``man's path'' that allows the ``dog'' to visit
           all of $H$ with a leash of (optimal) length 1
           is marked in \textcolor{red}{red}.
           \newline
           b) Linked \emph{free space} diagrams for the
           decision algorithm with a leash of length 1
           form a graph (one $C_{g,h}$ diagram for each pair of edges
           $(g, h)$, $g\in G$, $h\in H$; white area represents
           \emph{free space}; edges connect respective facets
           of diagrams reachable via \emph{free space});
           for details, refer to~\cite{Alt2003262}.
           Path corresponding
           to the ``man's path'' from a)
           is marked in \textcolor{red}{red}.}
  \label{H_G_graphs}
\end{figure*} 

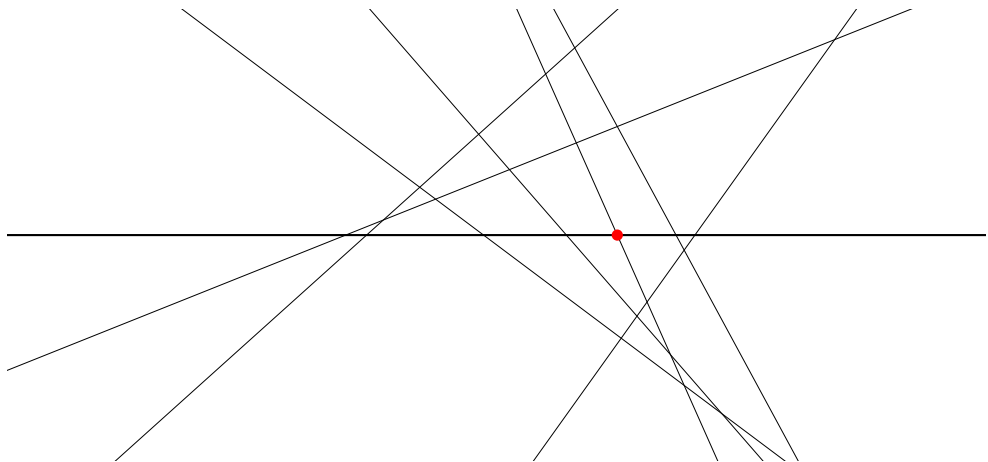
\begin{figure*}[p] 
  \begin{center}
    \begin{tikzpicture}
      [
        l/.style={very thin},
        bl/.style={thick}
      ]

      \clip (-7, -3) rectangle (6, 3);
      \draw[bl, name path=OX] (-10, 0) -- (10, 0);

      \coordinate (l1) at (-10, -3);
      \coordinate (r1) at (10, 5);

      \coordinate (l2) at (-10, -7);
      \coordinate (r2) at (10, 11);

      \coordinate (l3) at (-10, -17);
      \coordinate (r3) at (10, 11);

      \coordinate (l4) at (-10, 7);
      \coordinate (r4) at (10, -8);

      \coordinate (l5) at (-10, 12);
      \coordinate (r5) at (10, -11);

      \coordinate (l6) at (-10, 22);
      \coordinate (r6) at (10, -15);

      \coordinate (l7) at (-10, 25);
      \coordinate (r7) at (10, -20);

      \draw[l] (l1) -- (r1);
      \draw[l] (l2) -- (r2);
      \draw[l] (l3) -- (r3);
      \draw[l] (l4) -- (r4);
      \draw[l] (l5) -- (r5);
      \draw[l] (l6) -- (r6);
      \draw[l, name path=MED] (l7) -- (r7);

      \path[name intersections={of=OX and MED}];

      \node[fill=red,circle,inner sep=1.5pt] at (intersection-1) {};
    \end{tikzpicture}
  \end{center}
  \caption{Illustration of the median-of-lines problem with 7 lines.
           Median point is marked in \textcolor{red}{red}
           (there are 3 lines above, and 3 lines below it).}
  \label{median_of_lines_example}
\end{figure*} 

\subsubsection{Test Summary}
Test results reveal some interesting properties of the underlying problems.

For the point labeling problem, both algorithms
perform about 20 calls to $\mathcal{C}$, regardless of the input size.
In this case the \verb\quicksort\-based algorithm works about 15\% faster.

For the median-of-lines and graph matching problems,
the required number of calls to $\mathcal{C}$ grows steadily
as input size increases. In these cases, both algorithms run
in virtually the same time, with the \verb\boxsort\-based algorithm
seemingly gaining advantage as input size grows.

We note that the \verb\quicksort\-based algorithm might be
favored by our choice of input data, as it was
shown~\cite{DBLP:journals/comgeo/OostrumV04}
that for sufficiently random input (defined as having
close to uniform distribution of the roots of
comparison polynomials),
the algorithm requires only $O(\log n)$ calls to $\mathcal{C}$.

Our algorithm, on the other hand, always
requires only $O(\log n)$ calls to $\mathcal{C}$ with high probability,
\emph{regardless of the input}.
Based on the performed tests, we can say that our solution
is practically competitive with the \verb\quicksort\-based method,
while having superior provable bounds on the running time.

\subsection{Example Instances of Implemented Problems}
In Figures~\ref{point_labeling_example}, \ref{H_G_graphs}
and \ref{median_of_lines_example}, we show sample inputs
and results for the test problems.

\end{document}